\theoremstyle{remark}
\newtheorem{thm}{Theorem}
\newtheorem{lem}{Lemma}
\newtheorem{defn}{Definition}
\newcommand{\grad}{\nabla}
\newcommand{\Ocal}{\mathcal{O}}
\newcommand{\eps}{\epsilon}
\newcommand{\epsb}{\bm{\epsilon}}
\newcommand{\Xcal}{\mathcal{X}}
\newcommand{\Acal}{\mathcal{A}}
\newcommand{\Ecal}{\mathcal{E}}
\newcommand{\T}{\mathcal{T}}
\newcommand{\gradt}{\widetilde{\nabla}}
\DeclareMathOperator{\Tr}{Tr}
\def\bea#1\eea{\begin{align}#1\end{align}}
\def\be#1\ee{\begin{equation}#1\end{equation}}
\title{A Bound on Holographic Entanglement Entropy from Inverse Mean Curvature Flow}
\author{Sebastian Fischetti}
\author{and Toby Wiseman}
\affiliation{Theoretical Physics Group, Blackett Laboratory, Imperial College, London SW7 2AZ, UK}
\emailAdd{s.fischetti@imperial.ac.uk}
\emailAdd{t.wiseman@imperial.ac.uk}
\abstract{
Entanglement entropies are notoriously difficult to compute.  Large-$N$ strongly-coupled holographic CFTs are an important exception, where the AdS/CFT dictionary gives the entanglement entropy of a CFT region in terms of the area of an extremal bulk surface anchored to the AdS boundary.  Using this prescription, we show -- for quite general states of (2+1)-dimensional such CFTs -- that the renormalized entanglement entropy of any region of the CFT is bounded from above by a weighted local energy density.  The key ingredient in this construction is the inverse mean curvature (IMC) flow, which we suitably generalize to flows of surfaces anchored to the AdS boundary.  Our bound can then be thought of as a ``subregion'' Penrose inequality in asymptotically locally AdS spacetimes, similar to the Penrose inequalities obtained from IMC flows in asymptotically flat spacetimes.  Combining the result with positivity of relative entropy, we argue that our bound is valid perturbatively in~$1/N$, and conjecture that a restricted version of it holds in any CFT.
}
\begin{document}

\maketitle
\flushbottom

\section{Introduction}
\label{sec:intro}

There is now ample evidence that entanglement entropy plays a crucial role in a broad range of fields, ranging from characterizing quantum phases of matter in condensed matter systems~\cite{KitPre06,LevWen06,HamIon05,GroZha13}; understanding  RG flow and constructing~$c$-theorems~\cite{CasHue04,CasHue06,MyeSin10,MyeSin10b,CasHue12,Sol13}; studying phase transitions in QFTs~\cite{KleKut07,NisTak06,BuiPol08,NakNak11}; and understanding the emergence of classical spacetime from quantum gravity~\cite{Van09,Van10,MalSus13}.  Unfortunately, tractable methods for explicitly calculating entanglement entropy remain elusive.  A notable exception, however, is  holography in the context of AdS/CFT~\cite{Mal97,Wit98a,GubKle98}.  Specifically, in the large-$N$, large-$\lambda$ limit, the entanglement entropy of some spatial region~$B$ in a holographic CFT is given by the HRT formula~\cite{RyuTak06,RyuTak06b,HubRan07,LewMal13,DonLew16}:
\be
\label{eq:HRT}
S_B = \min_{X_B \sim B} \frac{\mathrm{Area}[X_B]}{4G_N},
\ee
where the minimization is performed over bulk extremal surfaces~$X_B$, and here~$\sim$ means ``homologous to''.

The HRT formula has been remarkably useful in understanding properties of the bulk gravitational theory from properties of~$S_B$; see e.g.~\cite{LasRab14,LasLin16,FauGui13,LasMcD13,SwiVan14,%
LinMar14,JafSuh14,JafLew15,DonHar16}.  However, it suffers a drawback: any computation of~$S_B$ relies quite explicitly on knowing the bulk geometry.  Thus equation~\eqref{eq:HRT} does not immediately take on any transparent interpretation in terms of field theory data, and does not substantially address the converse question of understanding entanglement entropy in purely field theoretic terms.

Our purpose in this paper is to address this deficiency, at least in the holographic context.  To that end, recall that in any state on~$B$ it is possible to define a state-dependent operator~$H_B$ -- the modular Hamiltonian -- so that the entanglement entropy can be written as~$S_B = \langle H_B \rangle$.  Like~$S_B$,~$H_B$ is in general very complicated and thus rarely known explicitly\footnote{Holographic considerations using the HRT formula give a bulk interpretation of~$H_B$~in terms of the modular Hamiltonian and area operator in the bulk~\cite{JafSuh14,JafLew15}, but this interpretation exchanges one intractable operator for another.}.  A natural attempt to alleviate this difficulty is to relax the equality~$S_B = \langle H_B \rangle$ to a bound
\be
\label{eq:roughbound}
S_B \leq \langle \Ecal_B \rangle
\ee
for some appropriately ``nice'' operator~$\Ecal_B$.  Ideally, such an operator should reduce to the modular Hamiltonian~$H_B$ in cases where the latter is known.

Of course, as stated the bound~\eqref{eq:roughbound} is trivial, because~$S_B$ is divergent and thus~$\langle \Ecal_B \rangle$ must be divergent as well.  Let us therefore be more precise, restricting our attention to~(2+1)-dimensional field theories.  Recall that in a continuum QFT, the entanglement entropy~$S_B$ is divergent due to short-distance correlations across the entanglement surface~$\partial B$.  Introducing a UV regulator~$\eps$, the leading piece of this divergence takes the ``area law'' form~\cite{BomKou86,Sre93}
\be
\label{eq:arealaw}
S_B = a_{-1} \, \frac{L}{\eps} + S^\mathrm{ren}_B + \cdots,
\ee
where~$a_{-1}$ is some state-independent constant and~$L$ is the length of~$\partial B$.  In a general QFT, the suppressed terms~$\cdots$ may in fact contain sub-leading state-dependent divergent terms; see e.g.~\cite{MarWal16}.  But when the area law term in~\eqref{eq:arealaw} is the only divergent piece of~$S_B$ -- such as in a large-$N$ holographic CFT deformed by appropriate operators -- all state dependence is contained in the finite piece~$S^\mathrm{ren}_B$, termed the \textit{renormalized} entanglement entropy\footnote{Note that as defined here the renormalized entanglement entropy appears UV-cutoff dependent; this will be remedied when we define it more precisely below. \label{foot:renscheme}} (sometimes also called the ``universal'' contribution to the entropy).

Then our main result is that -- under certain assumptions to be made more precise later -- in any state of a~(2+1)-dimensional holographic CFT, the renormalized entanglement entropy obeys
\be
\label{eq:sketchybound}
S^\mathrm{ren}_B \leq \left\langle \Ecal_B \right\rangle - 8\pi^2 c_\mathrm{eff} \chi_B,
\ee
where~$c_\mathrm{eff} \equiv \ell^2/(16\pi G_N) \sim N^2$ is the effective central charge of the CFT,~$\chi_B$ is the Euler characteristic of~$B$, and
\be
\Ecal_B \equiv 2\pi \int_B \omega \varepsilon.
\ee
Here~$\varepsilon$ is the energy density on~$B$ and~$\omega$ is a state-dependent (but c-number) positive weighting function which can be computed explicitly from the dual geometry; thus like~$H_B$,~$\Ecal_B$ is a state-dependent operator.  We emphasize that~\eqref{eq:sketchybound} holds not just in the limit of a classical bulk, but for any state which is a finite-order perturbation (say, in~$1/N$) of a state dual to Einstein gravity.  In such a case, the weighting function~$\omega$ is calculated only from the classical geometry, neglecting perturbative corrections.  In the special case when~$B$ is a disk on Minkowski space and the CFT is in vacuum, the operator~$\Ecal_B$ corresponds precisely to~$H_B$ and the inequality in~\eqref{eq:sketchybound} becomes saturated.  Thus one can think of~$\Ecal_B$ as a tractable generalization of~$H_B$, and the properties of~$\omega$ allow us to make bulk-independent statements about~$S^\mathrm{ren}_B$; for example, whenever~$\langle\varepsilon\rangle$ is non-positive on~$B$, it follows that~$S^\mathrm{ren}_B$ is bounded above by~$-8\pi^2 c_\mathrm{eff} \chi_B$.

The key ingredient in proving~\eqref{eq:sketchybound} is a geometric flow called the inverse mean curvature (IMC) flow, first studied in~\cite{Ger73,JanWal77} in the context of proving Penrose inequalities in asymptotically flat spacetimes (that is, lower bounds on the ADM mass in terms of the area of any horizons).  It was shown in~\cite{Gib98} that similar inequalities can be obtained from IMC flows in black hole spacetimes with homogeneous slicings, but Lee and Neves have shown that in general asymptotically AdS spacetimes a similar approach yields much weaker bounds than in the asymptotically flat case~\cite{Nev10,LeeNev15}.  Our approach here can be thought of as a strengthening of the results of~\cite{LeeNev15}, which we do by introducing a new IMC flow in which the flow surfaces are anchored to the AdS boundary.  The bound~\eqref{eq:sketchybound} can thus be thought of as a generalized Penrose inequality applied to subregions of the AdS boundary.

The essence of our approach is as follows.  By the homology constraint~\cite{HeaTak07} in the HRT formula, there exists a spatial slice ``enclosed'' by~$B$ and~$X_B$.  The IMC flow defines a preferred foliation of this slice into two-dimensional surfaces that flow from~$X_B$ to~$B$, as shown in Figure~\ref{fig:IMCsketch}.  Under this flow, the so-called reduced Hawking mass~$I_H$ obeys a monotonicity property, yielding precisely the bound~\eqref{eq:sketchybound}.  It is especially worth noting that to obtain this result, we must perform the flow over a \textit{maximal-volume} spatial slice enclosed by~$X_B$ and~$B$, the volume of which has recently been conjectured to be dual to the complexity of the state on~$B$~\cite{BenCar16} (see also~\cite{Sus14,Sus14b,StaSus14,BroRob15,BroRob15b,BroSus16,ChaMar16} for related ideas).

\begin{figure}[t]
\centering
\includegraphics[width=0.55\textwidth,page=1]{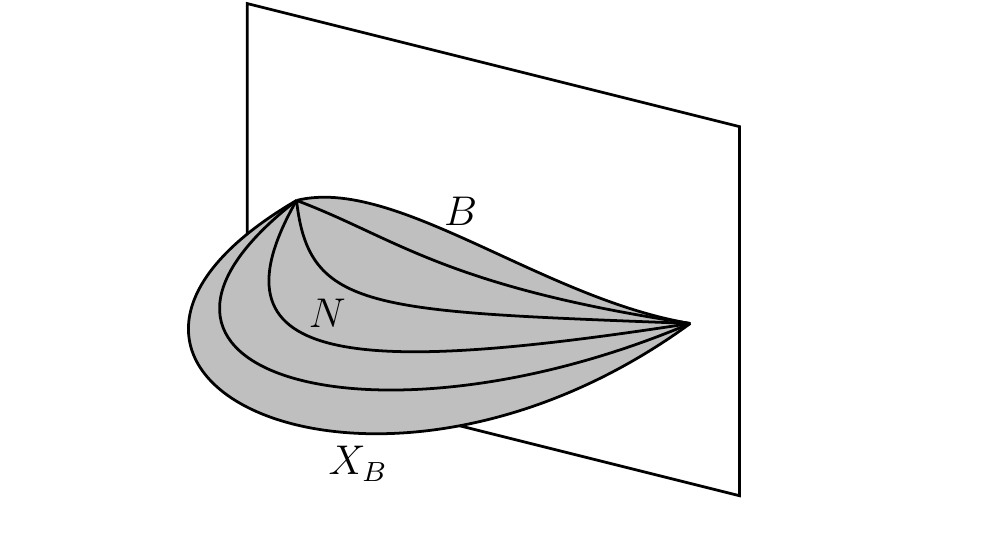}
\caption{$B$ is a spatial region of a (2+1)-dimensional CFT on the boundary, and~$X_B$ is the corresponding HRT surface.  By the homology constraint there exists a spatial slice~$N$ (shaded gray) with boundary given by~$X_B \cup B$; when~$N$ is chosen to be a maximal-volume slice, the IMC flow on~$N$ from~$X_B$ to~$B$ (shown as a foliation of~$N$) produces the bound~\eqref{eq:sketchybound}.}
\label{fig:IMCsketch}
\end{figure}

This paper is organized as follows.  In Section~\ref{sec:IMCF}, we introduce the IMC flow and review the monotonicity property of the reduced Hawking mass, and then derive a generalization to the case of boundary-anchored flow surfaces.  In Section~\ref{sec:bounds} we compute the asymptotic behavior of the reduced Hawking mass and apply the monotonicity of the flow to derive~\eqref{eq:sketchybound}, stating our assumptions precisely in the process.  The reader only interested in the final result may skip the calculations in Section~\ref{sec:bounds} and just read the final statement of Theorem~\ref{thm:bound}.  In Section~\ref{sec:perturbative} we note that~$\Ecal_B$ and~$H_B$ agree in the case where the latter is known explicitly (i.e.~in the vacuum state on Minkowski space and when~$B$ is a disk).  Combined with positivity of the relative entropy, we use this feature to argue that our bound should hold for any state which is a perturbative correction of a classical one.  We conclude in Section~\ref{sec:disc} with a discussion of open directions and generalizations, and possible connections to complexity.  Appendices~\ref{app:Kconformal},~\ref{app:massreg},~\ref{app:intevolution}, and~\ref{app:pureAdS} contain useful results we will invoke throughout the text.

\subsection*{Preliminaries}

Unless otherwise specified, our conventions will be as follows.  We will consider a~(3+1)-dimensional asymptotically locally AdS spacetime~$(M,g_{ab})$ obeying Einstein's equation with negative cosmological constant:
\be
\label{eq:Einstein}
R_{ab} = -\frac{3}{\ell^2} \, g_{ab} + 8\pi G_N \left(T_{ab} - \frac{1}{2} T g_{ab} \right),
\ee
with~$\ell$ the AdS length.  We will require that the matter obey the weak energy condition:
\be
T_{ab} \xi^a \xi^b \geq 0 \mbox{ for all timelike } \xi^a.
\ee
We use~$\Omega$ to denote a defining function of some conformal frame; that is,~$\Omega$ is a scalar field such that the conformally rescaled metric~$\tilde{g}_{ab} = \Omega^2 g_{ab}$ is regular on (at least some portion of) the asymptotic boundary, which we will always refer to as~$\partial M$.  $Z$ will refer to a Fefferman-Graham defining function~\cite{FefGra85,GraWit99}, which has the properties that
\be
\label{eq:FGZ}
|\gradt Z|^2 \equiv \tilde{g}^{ab} (\gradt_a Z) (\gradt_b Z) = \frac{1}{\ell^2}
\ee
in a neighborhood of the boundary, and that~$\partial M$ is a totally geodesic surface\footnote{Recall that a totally geodesic surface~$N$ embedded in~$M$ is one such that any geodesic of~$N$ is also a geodesic of~$M$, and thus the extrinsic curvature of~$N$ vanishes.} with respect to~$\tilde{g}_{ab}$ (see~\cite{FisKel12} for a review).

We will often be concerned with boundary-anchored surfaces, defined as:
\begin{defn}
\label{def:bndryanchored}
A \textit{boundary-anchored surface}~$\Sigma$ in an asymptotically locally AdS spacetime~$(M,g_{ab})$ is a surface with asymptotic boundary~$\partial \Sigma \subset \partial M$ such that (i) there exists a conformal frame~$\tilde{g}_{ab} = \Omega^2 g_{ab}$ which is regular in a neighborhood of~$\partial \Sigma$, and (ii)~$\Sigma$ is~$C^1$ with respect to~$\tilde{g}_{ab}$ in a neighborhood of~$\partial \Sigma$ (i.e.~the extrinsic curvature of~$\Sigma$ with respect to~$\tilde{g}_{ab}$ is finite).
\end{defn}

We will introduce a maximal-volume time slice~$N$ of this spacetime whose induced metric, intrinsic curvature, and extrinsic curvature will be denoted by~$h_{ab}$,~$^N \! R_{ab}$, and $^N \! K_{ab}$, respectively, and whose unit (timelike) normal will be~$t^a$.  The fact that~$N$ is a maximal-volume slice implies that its mean curvature is zero:~$\, ^N \! K \equiv h^{ab} \, ^N \! K_{ab} = 0$.  On~$N$, we will also consider a one-parameter family of two-dimensional surfaces~$\Sigma_\tau$; to avoid clutter, we will often avoid writing the parameter~$\tau$ explicitly.  The induced metric and intrinsic and extrinsic curvatures (in~$N$) of the surfaces~$\Sigma_\tau$ will be denoted by~$\sigma_{ab}$,~$^\Sigma \! R_{ab}$, and~$^\Sigma \! K_{ab}$, respectively, while the unit (spacelike) normals to the~$\Sigma_\tau$ in~$N$ will be denoted by~$n^a$.

Finally, the Gauss-Codazzi equations relate the intrinsic curvature of~$N$ to its extrinsic curvature as
\be
\label{eq:GaussCodazzi}
^N \! R = -\frac{6}{\ell^2} + \, ^N \! K_{ab} \, ^N \! K^{ab} + 16\pi G_N T_{ab} t^a t^b,
\ee
where we have explicitly used the fact that the mean curvature of~$N$ vanishes.

\section{The Inverse Mean Curvature Flow}
\label{sec:IMCF}

Our discussion of renormalized entropy has so far been relatively vague: as noted in footnote~\ref{foot:renscheme},~$S^\mathrm{ren}_B$ is in general cutoff-dependent (as can be easily seen by considering a rescaling~$\eps \to \eps(1 + a \eps)$ of the UV cutoff).  Let us therefore begin by clarifying what we mean by~$S^\mathrm{ren}_B$; we will find that a more precise formulation leads us quite naturally to introduce the inverse mean curvature flow advertised above.

\subsection{Renormalized Area}
\label{subsec:renarea}

By the HRT formula, the entanglement entropy of some boundary region~$B$ is given by the area of the minimal-area extremal surface~$X_B$ anchored to~$\partial B$.  Because~$X_B$ reaches the asymptotic boundary, its area is infinite, and thus needs to be regulated.  A particularly useful choice of regulation consists of introducing a Fefferman-Graham defining function~$Z$ and excising the portion of~$X_B$ at~$Z < \eps/\ell$ for some small cutoff~$\eps$, as shown in Figure~\ref{fig:regulatedarea}.  If the components of the bulk stress tensor in any orthonormal frame fall off like~$T_{\hat{\mu}\hat{\nu}} = \Ocal(Z^2)$ near the conformal boundary, this regulated area has the well-known expansion
\be
\label{eq:Aexpansion}
A[X_B^\eps] = \ell^2 \frac{L}{\eps} + \Acal[X_B] + \Ocal(\eps),
\ee
where~$L$ is the length of~$\partial B$ in the conformal frame associated to~$Z$.  As shown by Graham and Witten~\cite{GraWit99}, because the cutoff~$\eps$ is defined by a Fefferman-Graham defining function~$Z$, the constant piece~$\Acal[X_B]$ is independent of the conformal frame associated to~$Z$.  Thus~$\Acal[X_B]$ provides a frame- and cutoff-independent definition of the holographic renormalized entanglement entropy:~$S^\mathrm{ren}_B = \Acal[X_B]/4G_N$.

\begin{figure}[t]
\centering
\includegraphics[width=0.35\textwidth,page=2]{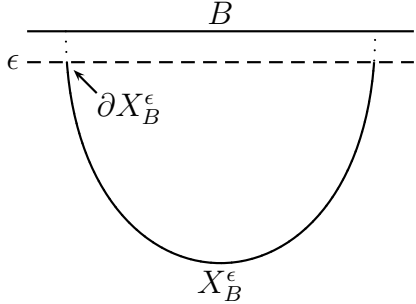}
\caption{A boundary-anchored extremal surface~$X_B$ has infinite area.  By introducing a cutoff at~$Z = \eps/\ell$ (shown as a dashed line), the~$\eps$-independent piece of~$A[X_B^\eps]$ is a conformal invariant and defines the renormalized area~$\Acal[X_B]$.  This renormalized area can alternatively be defined via the subtraction of covariant counterterms on the regulated boundary~$\partial X_B^\eps$.}
\label{fig:regulatedarea}
\end{figure}

In fact, we may define the renormalized area via an appropriate subtraction:
\be
\Acal[X_B] = \lim_{\eps \to 0} \left(A[X_B^\eps] + A_\mathrm{ct}[X_B^\eps] \right),
\ee
where~$A_\mathrm{ct}[X_B^\eps] = -\ell^2 L/\eps + \Ocal(\eps)$.  If one wishes to define~$X_B$ as a stationary point of the functional~$\Acal[X_B]$, then the counterterm~$A_\mathrm{ct}[X_B^\eps]$ must be chosen to impose Dirichlet boundary conditions on~$X_B$.  In such a case, its form is fixed to simply be the proper length of~$\partial X_B^\eps$~\cite{TayWoo16}.  However, here we are only interested in on-shell evaluations of~$\Acal[X_B]$ -- that is, we wish to compute~$\Acal[X_B]$ on an \textit{a priori} specified surface~$X_B$ -- so that we may choose any~$A_\mathrm{ct}[X_B^\eps]$ without worrying about its variations (in an abuse of terminology, we will continue to call~$A_\mathrm{ct}$ a counterterm).  To that end, note that as found in~\cite{AleMaz08} (and reviewed in Lemma~\ref{lem:geocurvature} of Appendix~\ref{app:Kconformal}), we may take~$A_\mathrm{ct}[X_B^\eps]$ to be
\be
A_\mathrm{ct}[X_B^\eps] = \ell^2 \int_{\partial X_B^\eps} k_g = -\ell^2 \frac{L}{\eps} + \Ocal(\eps),
\ee
where~$k_g$ is the geodesic curvature of~$\partial X_B^\eps$ in~$X_B$\footnote{That is, if~$a^a$ is the geodesic acceleration of~$\partial X_B^\eps$ and~$v^a$ is the unit outward-pointing normal to~$\partial X_B^\eps$ in~$X_B$, then~$k_g = v_a a^a$. \label{foot:kg}}.  This choice of counterterm has the advantage that the Gauss-Bonnet theorem relates it to the scalar curvature of~$X_B^\eps$ as
\be
\label{eq:GaussBonnet}
\int_{X_B^\eps} \, ^{X_B} \! R = 4\pi \chi_{X_B} + 2 \int_{\partial X_B^\eps} k_g,
\ee
where~$^{X_B} \! R$ and~$\chi_{X_B}$ are the Ricci scalar and Euler characteristic of~$X_B$.  We can thus express the renormalized area of~$X_B$ directly as an integral over all of~$X_B$, without any need for a UV cutoff at all:
\be
\Acal[X_B] = \ell^2 \left[\frac{1}{4}\int_{X_B} \left(2 \, ^{X_B} \! R + \frac{4}{\ell^2} \right) - 2\pi \chi_{X_B}\right].
\ee

The integral appearing in the expression above bears a striking resemblance to the so-called Hawking mass, defined on any surface~$\Sigma$ as~$m_H[\Sigma] = \sqrt{A[\Sigma]} \, I_H[\Sigma]$, with
\be
\label{eq:redHawkingmass}
I_H[\Sigma] \equiv \int_\Sigma \left(2 \, ^\Sigma \! R + \frac{4}{\ell^2} - \, ^\Sigma \! K^2 \right).
\ee
We will call~$I_H[\Sigma]$ the reduced Hawking mass of~$\Sigma$.  Since~$X_B$ is an extremal surface (and thus~$^{X_B} \! K = 0$), we therefore find that the renormalized area~$\Acal[X_B]$ can be written as
\be
\label{eq:ArenI}
\Acal[X_B] = \ell^2\left(\frac{1}{4} I_H[X_B] - 2\pi \chi_{X_B}\right).
\ee
It is important to note that the factor of~$^\Sigma \! K^2$ in~$I_H[\Sigma]$ is not a trivial addition: as we show in Appendix~\ref{app:massreg}, it renders~$I_H[\Sigma]$ finite for \textit{any} boundary-anchored surface~$\Sigma$ lying on an extremal time slice.

We may now immediately obtain a partial bound on~$\Acal[X_B]$ and thus~$S^\mathrm{ren}_B$, reproducing one first obtained in~\cite{AleMaz08}.  To that end, consider the case when the bulk is pure AdS and when~$B$ lies on a static time slice of the boundary, and consider an extremal surface~$\Xcal_B$ anchored on~$\partial B$ with the same topology as~$B$.  Note that~$\Xcal_B$ need not be the same as the HRT surface~$X_B$, but by the HRT formula we must have that~$\Acal[\Xcal_B] \geq \Acal[X_B]$.  Then in Appendix~\ref{app:massreg} we show that
\be
\label{eq:IAdS}
I_H[\Xcal_B] = -2 \int_{\Xcal_B} \left|^{\Xcal_B} \! K_{ab}\right|^2 \leq 0,
\ee
and thus by~\eqref{eq:ArenI} we find
\be
\label{eq:partialbound}
S^\mathrm{ren}_B = \frac{\Acal[X_B]}{4G_N} \leq \frac{\Acal[\Xcal_B]}{4G_N} = \frac{\ell^2}{4G_N} \left(\frac{1}{4} I_H[\Xcal_B] - 2\pi\chi_B \right) \leq -8\pi^2 c_\mathrm{eff} \chi_B,
\ee
where we have introduced the effective central charge~$c_\mathrm{eff} \equiv \ell^2/16\pi G_N$.  This result bounds the vacuum renormalized entanglement entropy of arbitrary regions of a holographic CFT$_3$ on Minkowski space.  In fact, it is easy to check that taking~$B$ to be a disk saturates the inequalities in~\eqref{eq:partialbound} (recall that~$\chi = 1$ for a disk), so we conclude that disks maximize~$S^\mathrm{ren}_{\chi_B = 1}$, as was found globally in~\cite{AleMaz08} and locally in~\cite{Mez14,AllMez14,FauLei15}.

Can this bound be generalized to arbitrary states and boundary metric?  In general,~$I_H[\Xcal_B]$ does not have a definite sign, and thus the approach above appears unfruitful.  However, the Hawking mass~$m_H$ has the special feature that it is monotonic along the inverse mean curvature flow.  This property, which we will now review, can be generalized to extend~\eqref{eq:partialbound} into a much more powerful bound.

\subsection{IMC Flow for Compact Surfaces}
\label{subsec:IMCcompact}

Let us first review the IMC flow in the case of compact surfaces without boundary, where the discussion is more streamlined.  Consider a maximal-volume time slice~$N$ and a one-parameter family~$\Sigma_\tau$ of compact two-dimensional surfaces without boundary (refer to the end of Section~\ref{sec:intro} for notation).  The IMC flow is then defined by requiring that the~$\Sigma_\tau$ move along their unit normals with speed equal to their inverse mean curvature.  Explicitly, we require that
\be
\label{eq:flow}
\pounds_{\hat{n}} \tau = 1,
\ee
where~$\hat{n}^a \equiv \, ^\Sigma \! K^{-1} n^a$ is the normal to the flow surfaces with magnitude~$|\hat{n}^a| = \, ^\Sigma \! K^{-1}$, with~$^\Sigma K \equiv \sigma^{ab} \, ^\Sigma K_{ab}$ the mean curvature of the~$\Sigma_\tau$ in~$N$.  One may alternatively think of the~$\Sigma_\tau$ as level sets of the scalar~$\tau$ (the ``flow time'') on~$N$, in which case the flow equation~\eqref{eq:flow} can be re-expressed in terms of~$\tau$ as
\be
\label{eq:flowpde}
\grad_a\left(\frac{\grad^a \tau}{|\grad\tau|}\right) = |\grad\tau|,
\ee
with~$\grad_a$ the covariant derivative operator on~$N$.

By~\eqref{eq:flow}, evolution along the flow is given by~$\Sigma_{\tau'} = \Phi^{\tau' - \tau} \Sigma_\tau$, where~$\Phi^\tau$ is the flow along integral curves of~$\hat{n}^a$ by a time~$\tau$.  It thus follows that the rate of change of any flow quantities with respect to~$\tau$ can be computed by taking Lie derivatives along~$\hat{n}^a$.  For instance, the rate of change of the induced metric~$\sigma_{ab}$ is
\be
\label{eq:sigmadot}
\dot{\sigma}_{ab} \equiv \pounds_{\hat{n}} \sigma_{ab} = \frac{1}{^\Sigma \! K} \pounds_n \sigma_{ab} = \frac{2}{^\Sigma \! K} \, ^\Sigma \! K_{ab},
\ee
while the Gauss-Codazzi equations yield
\be
\label{eq:GaussCodazzi2}
^\Sigma \! \dot{K} = \frac{1}{2 \, ^\Sigma \! K} \left(- \, ^N \! R + \, ^\Sigma \! R - \, ^\Sigma \! K^2 - |\, ^\Sigma \! K_{ab}|^2 - 2 \left|D_a \ln \, ^\Sigma \! K \right|^2 + 2 D^2 \ln \, ^\Sigma \! K\right),
\ee
where~$D_a$ is the covariant derivative operator on~$\Sigma$.

Of particular interest are integrated quantities: we show in Appendix~\ref{app:intevolution} that along an IMC flow of compact surfaces without boundary, the rate of change of any integral~$F[\Sigma] = \int_\Sigma f$ for scalar~$f$ is given by
\be
\label{eq:Fdotcompact}
\dot{F}[\Sigma] = \int_\Sigma \left(\dot{f} + f\right).
\ee
In the special case~$f = 1$, we obtain~$\dot{A}[\Sigma] = A[\Sigma]$, and thus the area of the flow surfaces grows exponentially in flow time:
\be
\label{eq:Atau}
A[\Sigma_\tau] = e^\tau A[\Sigma_0].
\ee
Likewise, we may use the Gauss-Bonnet theorem~\eqref{eq:GaussBonnet} (with no boundary term) to write the reduced Hawking mass of~$\Sigma$ as
\be
I_H[\Sigma] = 8\pi \chi_\Sigma + \int_{\Sigma} \left(\frac{4}{\ell^2} - \, ^\Sigma K^2 \right);
\ee
then taking~$f = 4/\ell^2 - \, ^\Sigma \! K^2$, we obtain~$F[\Sigma] = I_H[\Sigma] - 8\pi\chi_\Sigma$.  As long as the flow is smooth (so that the topology of the~$\Sigma$ remains unchanged), we may use~\eqref{eq:GaussCodazzi},~\eqref{eq:GaussCodazzi2}, and the Gauss-Bonnet theorem to obtain
\begin{subequations}
\label{eq:Idot}
\bea
\dot{I}_H[\Sigma] + \frac{1}{2} I_H[\Sigma] &= \int_\Sigma \left(\left| ^N \! K_{ab} \right|^2 + \left| ^\Sigma \widehat{K}_{ab} \right|^2 + 2 \left|D_a \ln \, ^\Sigma \! K \right|^2 + 16\pi G_N T_{ab} t^a t^b \right), \label{subeq:Idotint} \\
	&\geq 0,
\eea
\end{subequations}
where we have integrated by parts and defined the trace-free extrinsic curvature~$^\Sigma \widehat{K}_{ab} \equiv \, ^\Sigma K_{ab} - \, ^\Sigma \! K \sigma_{ab}/2$ of~$\Sigma$, and to obtain the inequality we exploited the fact that~$T_{ab}$ obeys the weak energy condition.  This proves the advertised monotonicity: the above implies that~$e^{\tau/2} I_H[\Sigma]$ is non-decreasing along the flow.  In terms of the Hawking mass~$m_H[\Sigma] = \sqrt{A[\Sigma]} \, I_H[\Sigma]$, we note from~\eqref{eq:Atau} that~$e^{\tau/2} \propto \sqrt{A}$, and thus~$m_H[\Sigma]$ is monotonic along the flow as well.

We have thus established the monotonicity of the Hawking mass under a smooth IMC flow of compact surfaces.  However, it is important to note that IMC flows are typically not smooth; they may encounter singularities or cusps past which the flow fails to proceed smoothly (for example, when~$^\Sigma \! K$ vanishes).  In such a case, the analysis above fails.  However, it is possible to generalize it: in the asymptotically flat case, Huisken and Ilmanen~\cite{HuiIlm01} showed that a regulated version of~\eqref{eq:flowpde} can be used to define a ``weak'' IMC flow wherein the flow surfaces are allowed to ``jump'' over singularities of the flow while preserving monotonicity of the Hawking mass.  These results were later generalized to the asymptotically hyperbolic context by Lee and Neves~\cite{LeeNev15}, so the monotonicity result outlined here holds even when a smooth flow does not exist, as long as the flow surfaces do not change their topology.  Thus starting the flow on an apparent horizon and flowing out to the asymptotic boundary, monotonicity of~$m_H[\Sigma]$ yields a lower bound on the asymptotic Hawking mass in terms of the area of the apparent horizon.  In an asymptotically flat spacetime,~$m_H[\Sigma]$ asymptotically approaches the ADM mass, and thus one obtains a Penrose inequality~\cite{JanWal77,HuiIlm01}.  On the other hand, in the asymptotically AdS case~$m_H[\Sigma]$ approaches the integrated energy density of the boundary in some conformal frame defined by the flow, which in general does not coincide with the mass of the spacetime except for highly symmetric configurations~\cite{Gib98,Nev10,LeeNev15}; thus the resulting inequality is substantially weaker.

\subsection{IMC Flow for Surfaces with Boundary}
\label{subsec:IMCbndry}

Our main purpose now is to generalize the above monotonicity result to IMC flows of surfaces anchored to the AdS boundary.  Of course, this process first requires defining what is meant by a ``boundary-anchored flow''.  Moreover, because the flow surfaces have a boundary, we must then ensure that any boundary terms picked up by~$\dot{I}_H$ have the correct sign to preserve monotonicity of~$e^{\tau/2} I_H$.  With an appropriate definition of boundary-anchored flow, we will in fact show that these terms vanish, so that~\eqref{eq:Idot} continues to hold.

\subsubsection{Boundary-Anchored Flows}
\label{subsec:bndryflow}

\begin{figure}[t]
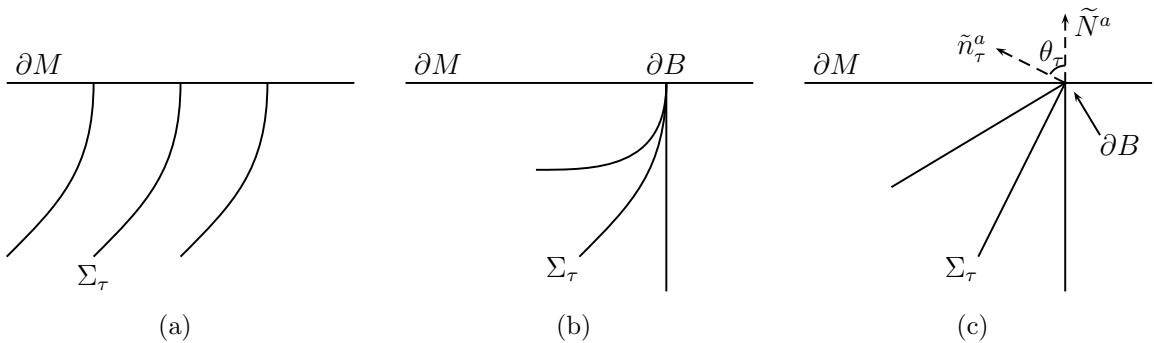

\centering
\subfigure[]{
\includegraphics[width=0.3\textwidth,page=3]{Figures-pics}
\label{subfig:movingflow}
}
\hspace{0.1cm}
\subfigure[]{
\includegraphics[width=0.3\textwidth,page=4]{Figures-pics}
\label{subfig:badflow}
}
\hspace{0.1cm}
\subfigure[]{
\includegraphics[width=0.3\textwidth,page=5]{Figures-pics}
\label{subfig:goodflow}
}
\caption{Solutions to the IMC flow equation may in principle exhibit different types of behaviors when the IMC surfaces are anchored to the asymptotic boundary~$\partial N$.  For example, the surfaces may be asymptotically extremal and ``move'' along the boundary, as in~\subref{subfig:movingflow}; they may remain anchored on the same curve~$\partial B$ at the boundary but always intersect the boundary at the same angle, as in~\subref{subfig:badflow}; or they may remain anchored on the same curve~$\partial B$ while the angle~$\theta_\tau$ of their intersection with the boundary changes with~$\tau$, as in~\subref{subfig:goodflow}.  Here we will only consider case~\subref{subfig:goodflow} by imposing appropriate boundary conditions on the flow.}
\label{fig:possibleIMCs}
\end{figure}

How should one think of a boundary-anchored flow?  Specifically, how do solutions to the flow equation~\eqref{eq:flow} behave when the flow surfaces reach the asymptotic boundary?  To address these questions, in Figure~\ref{fig:possibleIMCs} we sketch the near-boundary behavior of some flows which are in principle possible solutions to the IMC flow equation.  Now, recall that we are ultimately interested in IMC flows that start on an extremal surface and asymptote to a specified boundary region~$B$.  Thus the only boundary-anchored flows in which we are interested are those that behave as shown in Figure~\ref{subfig:goodflow}: that is, the flow surfaces should all be anchored to the same curve~$\partial B$ on the AdS boundary, while the angle of their intersection with the boundary should change smoothly with~$\tau$.  More precisely, consider a conformal compactification~$\tilde{h}_{ab}$ of the spacetime which is regular at~$\partial B \subset \partial N$; also define~$\widetilde{N}^a$ and~$\tilde{n}^a_\tau$ as the unit (with respect to~$\tilde{h}_{ab}$) normals to~$\partial N$ and~$\Sigma_\tau$, respectively.  Then the flow shown in Figure~\ref{subfig:goodflow} has the property that~$\theta_\tau$, defined via
\be
\label{eq:thetatau}
\cos\theta_\tau \equiv \widetilde{N}_a \tilde{n}^a_\tau,
\ee
is a smooth always-varying function of~$\tau$.

It immediately follows (either from the above definition or from Figure~\ref{subfig:goodflow}) that~$\tau$ must be multi-valued at~$\partial B$, and that this multi-valued singularity is captured by
\be
\label{eq:tausing}
\tau|_{\partial B} = f(\widetilde{N}_a \tilde{n}^a_\tau)
\ee
for a (nonzero) smooth function~$f$.  It is therefore natural to take as a \textit{definition} of a boundary-anchored IMC flow the requirement that near~$\partial B$,~$\tau$ exhibit the singularity structure~\eqref{eq:tausing} on top of any single-valued behavior.  We therefore build these ingredients into the following definition:

\begin{defn}
\label{def:boundaryIMC}
Let~$(N,h_{ab})$ be an asymptotically locally hyperbolic three-dimensional space and let~$\tilde{h}_{ab}$ be a conformal completion of it.  A \textit{regular boundary-anchored IMC flow} is a one-parameter family of two-dimensional surfaces~$\Sigma_\tau$ anchored to a curve~$\partial B \subset \partial N$ such that (i) the flow equation~\eqref{eq:flowpde} is obeyed and (ii) near~$\partial B$ in~$N$,~$\tau$ can be expressed as
\be
\label{eq:boundaryIMCdef}
\tau = f(\tilde{\theta}) + \tau_\mathrm{reg}.
\ee
Here~$f$ is a smooth nonzero function of its argument when~$\tilde{\theta} \in (0,\pi/2]$;~$\tilde{\theta}$ is an arbitrary scalar field whose restrictions to each~$\Sigma_\tau$ are~$C^1$ and which approach~$\theta_\tau$ (defined in~\eqref{eq:thetatau}) at~$\partial B$; and~$\tau_\mathrm{reg}$ is single-valued and~$|\gradt \tau_\mathrm{reg}|$ finite at~$\partial B$ .
\end{defn}

This definition provides a boundary condition guaranteeing that near the asymptotic boundary, the IMC flow looks like Figure~\ref{subfig:goodflow}.  To show that this definition is compatible with the flow equation~\eqref{eq:flowpde}, and thus that there do in fact exist regular boundary-anchored IMC flows as defined above, let us construct a simple example in pure AdS.  We restrict our attention to the Poincar\'e patch, in which static spatial slices of constant~$t$ are maximal-volume slices whose metric we will write in the form
\be
\label{eq:PoincareAdS}
ds^2_N = \frac{\ell^2}{z^2} \left( dz^2 + dx^2 + dy^2 \right) = \frac{\ell^2}{\xi^2 \sin^2\theta} \left(d\xi^2 + \xi^2 \, d\theta^2 + dy^2\right),
\ee
where~$z = \xi \sin\theta$ and~$x = \xi \cos\theta$.

We will consider a flow anchored to the line~$x = z = 0$.  As shown in Appendix~\ref{app:pureAdS}, the IMC flow equation~\eqref{eq:flowpde} is solved by
\be
\label{eq:halfplaneflow}
\tau = \ln\csc^2\theta,
\ee
corresponding to an IMC flow starting on the extremal surface~$x = 0$ and asymptoting to the half-plane~$x > 0$,~$z = 0$.  This flow takes the form~\eqref{eq:boundaryIMCdef} with~$\tilde{\theta} = \theta$, and is therefore an example of a regular boundary-anchored IMC flow.  In the proof of Theorem~\ref{thm:monotonicity} below, we will construct the asymptotics of such flows much more generally.

\subsubsection{Intuition: Vanishing Boundary Terms}
\label{subsec:boundaryint}

We now need to examine the behavior of the reduced Hawking mass on boundary-anchored flows.  As mentioned above,~$\dot{I}_H$ will pick up boundary terms, which in principle could spoil the monotonicity property of~$e^{\tau/2} I_H$.  To gain some intuition for the role these boundary terms might play, let us again consider the pure AdS flow~\eqref{eq:halfplaneflow}.  It is straightforward to compute the intrinsic and extrinsic curvature of the flow surfaces:
\be
^\Sigma \! K = \frac{2}{\ell}\sqrt{1-e^{-\tau}}, \qquad ^\Sigma \! R = -e^{-\tau}\frac{2}{\ell^2},
\ee
and therefore we find that the integrand of the reduced Hawking mass~$I_H$ vanishes.  It is also easy to see that the integrand in the right-hand side of~\eqref{subeq:Idotint} vanishes as well, from which we conclude that~$\dot{I}_H$ cannot receive a contribution from any potential boundary terms.

Next, consider a general flow anchored to a curve~$\partial B$ on the boundary of a general asymptotically locally AdS spacetime.  It is always possible (at least locally) to choose a conformal frame in which~$\partial B$ is geodesic; let~$z$ be the associated Fefferman-Graham coordinate on the extremal time slice~$N$.  In this frame, we choose the boundary coordinate~$y$ to be the proper distance along~$\partial B$ and the boundary coordinate~$x$ to be a Gaussian normal coordinate fired off of~$\partial B$.  With this choice of coordinates, the metric on~$N$ near~$\partial B$ takes the form
\be
\label{eq:nearboundary}
ds^2_N = \frac{\ell^2}{z^2}\left[dz^2 + dx^2 + \left(1+\Ocal(x^2)\right)dy^2 + \Ocal(z^2)\right].
\ee
But this metric looks just like pure AdS to quadratic order in~$x$ and~$z$ (which by construction are both small near the line~$\partial B$ which anchors the flow).  It is therefore quite reasonable to expect that just as for the half-plane flow~\eqref{eq:halfplaneflow}, boundary terms will not contribute to~$\dot{I}_H$.  This will precisely be the case, as we will show in the remainder of this Section.

\subsubsection{Monotonicity of Reduced Hawking Mass}
\label{subsec:monotoneboundaryflows}

Because~$I_H$ is finite when evaluated on any boundary-anchored surface (as shown in Appendix~\ref{app:massreg}),~$\dot{I}_H$ must be finite as well.  However, in order to control the boundary terms in~$\dot{I}_H$, it will prove useful to define a regulated IMC flow in which the asymptotic regions of the flow surfaces are cut off by some regulator.  We may then analyze the behavior of boundary terms in~$\dot{I}_H$ as the regulator is removed.

\begin{figure}[t]
\centering
\includegraphics[width=0.4\textwidth,page=6]{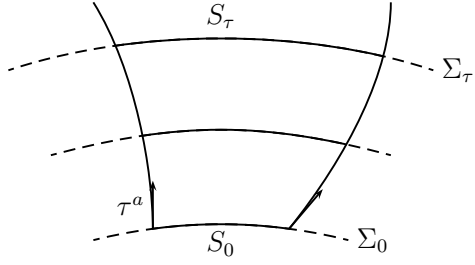}
\caption{The dashed lines represent an arbitrary IMC flow~$\Sigma_\tau$; the solid surfaces~$S_\tau$ are formed by selecting a portion~$S_0$ of an initial flow surface~$\Sigma_0$ and then transporting it along integral curves of a flow vector field~$\tau^a$.  This then defines an IMC flow~$S_\tau$ of surfaces with boundary.}
\label{fig:boundedflow}
\end{figure}

To that end, first consider an arbitrary IMC flow~$\Sigma_\tau$ (which need not be boundary-anchored).  We may define an IMC flow with boundary by first specifying some region~$S_0 \subset \Sigma_0$ of the initial flow surface, and then mapping~$S_0$ to the rest of the flow along the integral curves of some flow vector field~$\tau^a$ of our choosing.  The result is a family of flow surfaces~$S_\tau$ with boundary, as shown in Figure~\ref{fig:boundedflow}.  Of course, this procedure will only be well-defined if~$n_a \tau^a > 0$ everywhere, so that the integral curves of~$\tau^a$ never ``turn around''.  A natural condition to impose on~$\tau^a$ in order to guarantee this condition is
\be
\tau^a \grad_a \tau = 1,
\ee
which has the added benefit of allowing us to compute~$\tau$-derivatives along the flow as Lie derivatives along~$\tau^a$, as we did along~$\hat{n}^a$ for compact surfaces.  It follows from the definition of IMC flow that~$\tau^a$ can be decomposed into components normal and tangent to the~$S_\tau$ as
\be
\tau^a = \frac{n^a}{^S \! K} + \tau^a_\parallel,
\ee
where the unconstrained tangential component~$\tau^a_\parallel$ encodes how the boundary of the~$S_\tau$ moves ``in'' or ``out'' along the flow.

Using the Gauss-Bonnet theorem~\eqref{eq:GaussBonnet}, the reduced Hawking mass of the~$S_\tau$ can be written as
\be
\label{eq:redHawkingmass2}
I_H[S] = 8 \pi \chi_S + \int_S \left(\frac{4}{\ell^2} - \, ^S \! K^2\right) + 4 \int_{\partial S} k_g.
\ee
It is then straightforward to use the general time evolution formula~\eqref{eq:Fdot}, taking both~$f = 4/\ell^2 - \, ^S \! K^2$ and~$k_g$, to obtain the time evolution of~$I_H$ along the flow~$\tau^a$.  Using~\eqref{eq:GaussCodazzi},~\eqref{eq:GaussCodazzi2}, and the Gauss-Bonnet theorem, we obtain
\begin{subequations}
\begin{multline}
\label{subeq:firstterm}
\frac{d}{d\tau} \int_S \left(\frac{4}{\ell^2} - \, ^S K^2\right) = \int_{\partial S} \left( -2 v^a \grad_a \ln \, ^S \! K  + v_a \tau^a_\parallel \left(\frac{4}{\ell^2} - \, ^S \! K^2\right) \right) \\ -\frac{1}{2} I_H[S] + \int_S \left(\left| ^N \! K_{ab} \right|^2 + \left| ^S \widehat{K}_{ab} \right|^2 + 2 \left|D_a \ln \, ^S \! K \right|^2 + 16\pi G_N T_{ab} t^a t^b \right),
\end{multline}
\be
\frac{d}{d\tau} \int_{\partial S} k_g = \int_{\partial S} \left(\tau^a_\perp \grad_a k_g - k_g \tau^a_\perp a_a \right),
\ee
\end{subequations}
where~$v^a$ is the unit outward-pointing normal to~$\partial S$ in~$S$ and~$\tau^a_\perp \equiv n^a/\, ^S \! K + v^a(v_b \tau^b_\parallel)$ is the component of~$\tau^a$ normal to~$\partial S$.  Noting once again the positivity of the integral over~$S$ on the right-hand side of~\eqref{subeq:firstterm}, we can combine these results to obtain
\be
\label{eq:dotredHawkingmass2}
\dot{I}_H[S] + \frac{1}{2} I_H[S] \geq I_1 + I_2,
\ee
where we have split the boundary terms into two contributions:
\begin{subequations}
\label{eqs:I12}
\bea
I_1 &\equiv \int_{\partial S} \frac{1}{^S \! K} \left(-2 v^a \grad_a \, ^S \! K +  4n^a \grad_a k_g - 4 k_g n^a a_a\right), \\
I_2 &\equiv \int_{\partial S} v_a \tau^a_\parallel \left(\frac{4}{\ell^2} - \, ^S \! K^2 - 4k_g^2 + 4v^b \grad_b k_g \right).
\eea
\end{subequations}
The first term is independent of the details of the choice of flow field~$\tau^a$, while the second depends on a particular choice of~$\tau^a_\parallel$.

Monotonicity of the flow will therefore be preserved as long as~$I_1 + I_2 \geq 0$.  For a general flow, one could imagine choosing the boundary~$\partial(\cup S_\tau)$ of the flow appropriately to guarantee this property; for instance, requiring~$\tau^a_\parallel = 0$ immediately kills the boundary term~$I_2$.  In such a case, some partial results by Marquardt~\cite{Mar15} show that a quantity closely related to the Hawking mass is monotonic along the flow as long as~$\partial(\cup S_\tau)$ is convex.  However, as our primary purpose is to specialize these results to regulated boundary-anchored flows, we will not pursue this direction here.  Instead, we will now prove the main result of this Section: that~$I_{1,2}$ vanish on any regular boundary-anchored flow.  We emphasize that this result does not require that the flow begin on an extremal surface; it is therefore stronger than what might have been intuitively expected from the example of Section~\ref{subsec:bndryflow}.

\begin{thm}
\label{thm:monotonicity}
Consider a maximal-volume time slice~$N$ of a (3+1)-dimensional asymptotically locally AdS spacetime obeying Einstein's equation~\eqref{eq:Einstein} with a stress tensor that satisfies the weak energy condition.  Let~$\Sigma_\tau$ be a regular boundary-anchored IMC flow on~$N$ which is anchored to the AdS boundary on a smooth curve~$\partial B$.  Then the reduced Hawking mass of these surfaces obeys
\be
\label{eq:Idotbndry}
\dot{I}_H[\Sigma] + \frac{1}{2} I_H[\Sigma] \geq 0,
\ee
and therefore~$e^{\tau/2} I_H[\Sigma]$ is non-decreasing along the flow.
\end{thm}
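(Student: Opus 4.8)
The plan is to reduce Theorem~\ref{thm:monotonicity} to showing that the boundary contributions $I_1$ and $I_2$ of~\eqref{eqs:I12} vanish on any regular boundary-anchored flow, since the interior inequality~\eqref{eq:dotredHawkingmass2} already gives $\dot{I}_H[\Sigma] + \tfrac{1}{2} I_H[\Sigma] \geq I_1 + I_2$ with a manifestly non-negative bulk integrand (using the weak energy condition). Thus the entire content is an asymptotic analysis near~$\partial B$. First I would set up the near-boundary geometry: by Lemma~\ref{lem:geocurvature} (Appendix~\ref{app:Kconformal}) and the discussion around~\eqref{eq:nearboundary}, choose a conformal frame in which~$\partial B$ is geodesic, with Fefferman--Graham coordinate~$z$, proper-distance coordinate~$y$ along~$\partial B$, and Gaussian normal coordinate~$x$, so the induced metric on~$N$ is pure AdS up to $\Ocal(x^2)$ and $\Ocal(z^2)$ corrections. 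The strategy is then to solve the flow equation~\eqref{eq:flowpde} perturbatively in this neighborhood, using the regular boundary-anchored ansatz~\eqref{eq:boundaryIMCdef}: write $\tau = f(\tilde\theta) + \tau_\mathrm{reg}$ with $\tilde\theta$ approaching the angle~$\theta_\tau$ of~\eqref{eq:thetatau} at~$\partial B$, and feed this into~\eqref{eq:flowpde} to determine the leading behavior of~$f$ and of the flow surfaces' geometry. The pure-AdS half-plane solution~\eqref{eq:halfplaneflow}, $\tau = \ln\csc^2\theta$, should emerge as the leading-order answer, with the subleading corrections controlled by the $\Ocal(x^2), \Ocal(z^2)$ terms in~\eqref{eq:nearboundary}.

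The key steps, in order, are: (1) introduce a regulator cutting off the asymptotic tips of the flow surfaces (e.g.\ at $z = \eps$ in the chosen frame), turning the~$\Sigma_\tau$ into genuine surfaces with boundary~$\partial S$, as in Figure~\ref{fig:boundedflow}, so that $I_1, I_2$ are well-defined finite integrals over~$\partial S$; (2) expand the flow quantities appearing in~\eqref{eqs:I12} --- namely~$^S\! K$, its normal and tangential derivatives, the geodesic curvature~$k_g$ of~$\partial S$ in~$S$, the geodesic acceleration~$a_a$, and the outward normal~$v^a$ --- in powers of the regulator using the perturbative solution from the setup above; (3) check that each factor in the integrands of~$I_1$ and~$I_2$, together with the measure on~$\partial S$, conspires so that the integrand vanishes as the regulator is removed. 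For~$I_1$ the combination $-2v^a\grad_a\, ^S\! K + 4 n^a \grad_a k_g - 4 k_g n^a a_a$, divided by~$^S\! K$, must go to zero faster than the length of~$\partial S$ diverges (or the length stays bounded and the bracket vanishes); the pure-AdS computation in Section~\ref{subsec:boundaryint}, where the whole integrand of~\eqref{subeq:Idotint} vanishes identically, is the model. For~$I_2$, note $v_a\tau^a_\parallel$ measures the rate at which~$\partial S$ slides along~$\partial B$; since in a regular boundary-anchored flow the anchoring curve~$\partial B$ is \emph{fixed} (only the intersection angle~$\theta_\tau$ varies), $v_a\tau^a_\parallel$ should vanish in the limit, or else the bracket $\tfrac{4}{\ell^2} - \, ^S\! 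K^2 - 4k_g^2 + 4v^b\grad_b k_g$ should, by the same pure-AdS cancellation.

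Concretely, I would: substitute the ansatz into~\eqref{eq:flowpde} and solve order by order to get $\tau(z,x) = \ln\csc^2\theta + (\text{corrections})$ where $\theta = \arctan(z/x)$ to leading order; compute $|\grad\tau|$, hence $^S\! K = |\grad\tau|$ rescaled appropriately, and expand near $z\to 0$; then evaluate $k_g$ on the regulated boundary $\{z=\eps\}\cap S_\tau$ using $k_g = v_a a^a$ (footnote~\ref{foot:kg}); and finally assemble $I_1$ and $I_2$ and show they are $\Ocal(\eps)$ or smaller. The $\Ocal(x^2)$ and $\Ocal(z^2)$ corrections in~\eqref{eq:nearboundary} will produce subleading terms in all these quantities, and I expect the main obstacle to be bookkeeping these corrections carefully enough to be sure no $\Ocal(1)$ residual survives in $I_1+I_2$ --- in particular, tracking the $\Ocal(z^2)$ stress-tensor-sourced piece of the metric, which is where any state dependence enters and is the precise analogue of the ``mass aspect'' term that spoils the bound in the non-boundary-anchored AdS case~\cite{Nev10,LeeNev15}. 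Once $I_1 = I_2 = 0$ is established, \eqref{eq:Idotbndry} follows immediately from~\eqref{eq:dotredHawkingmass2}, and monotonicity of $e^{\tau/2} I_H[\Sigma]$ is then just the statement that $\tfrac{d}{d\tau}\bigl(e^{\tau/2} I_H\bigr) = e^{\tau/2}\bigl(\dot I_H + \tfrac12 I_H\bigr) \geq 0$.
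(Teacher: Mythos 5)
Your proposal is correct and follows essentially the same route as the paper's own proof: reduce~\eqref{eq:Idotbndry} to the vanishing of~$I_1$ and~$I_2$ via~\eqref{eq:dotredHawkingmass2}, regulate the flow surfaces near~$\partial B$, solve~\eqref{eq:flowpde} perturbatively in adapted coordinates using the ansatz~\eqref{eq:boundaryIMCdef} (with leading behavior~$\ln\csc^2\theta$), and show the boundary integrands fall off faster than the length element of the cutoff curves diverges. The only cosmetic difference is that the paper keeps the geodesic acceleration~$a(y)$ of~$\partial B$ general rather than fixing a frame in which~$\partial B$ is geodesic, but this does not change the argument.
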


\begin{proof}
By the stated assumptions, equation~\eqref{eq:dotredHawkingmass2} applies, and thus we need only evaluate the boundary terms~$I_1$ and~$I_2$.  Since the flow is singular at~$\partial B$, we begin by regulating it: in a neighborhood of~$\partial B$, we excise the asymptotic region of each flow surface~$\Sigma$ by introducing a one-parameter family of cutoff curves~$\partial \Sigma^\eps$ chosen so that~$\partial \Sigma^\eps \to \partial B$ as~$\eps \to 0$.  We may then introduce the unit vector field~$u^a$ tangent to~$\partial \Sigma^\eps$ and the outward-pointing unit normal vector field~$v^a$, as shown in Figure~\ref{fig:cutoffs}.  By construction,~$u^a$,~$v^a$, and~$n^a$ form an orthonormal basis of~$N$ in a neighborhood of~$\partial B$.

\begin{figure}[t]
\centering
\includegraphics[width=0.7\textwidth,page=7]{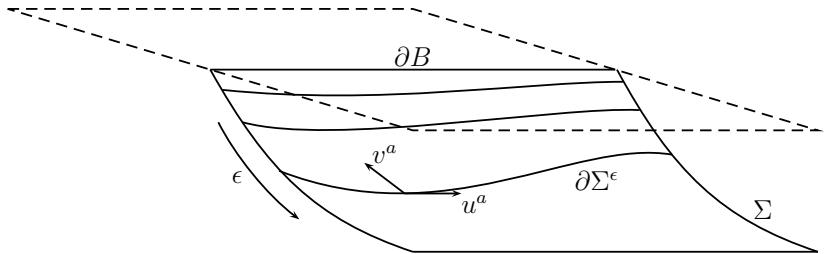}
\caption{The regulated IMC flow introduced in the proof of Theorem~\ref{thm:monotonicity}.  Here~$\Sigma$ is an IMC flow surface anchored to the AdS boundary (shown in dashed lines) on the curve~$\partial B$.  To regulate the region of~$\Sigma$ near~$\partial B$, we introduce a one-parameter family of curves~$\partial \Sigma^\eps$ which approach~$\partial B$ as~$\eps \to 0$.  The unit tangent and normal vector fields~$u^a$ and~$v^a$ to the~$\partial \Sigma^\eps$ define an orthonormal basis on~$\Sigma$ in a neighborhood of~$\partial B$.}
\label{fig:cutoffs}
\end{figure}

We now introduce a coordinate system on~$N$ adapted to~$\partial B$.  Consider any conformal frame in which~$\partial B$ has finite length; because~$N$ is asymptotically locally hyperbolic (since it is extremal, and thus by Lemma~\ref{lem:extremalorthog} it intersects the boundary orthogonally), near~$\partial N$ we may introduce a Fefferman-Graham coordinate~$z$ associated to this frame.  We then take the boundary coordinate~$y$ to be the proper distance along~$\partial B$ in this frame, and the boundary coordinate~$x$ to be a Gaussian normal coordinate off of~$\partial B$.  The metric on~$N$ near~$\partial B$ then takes the form
\be
ds^2_N = \frac{\ell^2}{z^2} \left[dz^2 + dx^2 + \left(1 - 2 x \, a(y) + \Ocal(x^2)\right)dy^2 + \Ocal(z^2)\right],
\ee
where~$a(y)$ is the geodesic acceleration of~$\partial B$ in this conformal frame.  Since~$\partial B$ is smooth and has finite length,~$a(y)$ is everywhere finite and~$y$ has finite range.

As in the case of the pure AdS flow~\eqref{eq:halfplaneflow}, it is convenient to introduce polar-like coordinates~$(\xi,\theta)$ via~$z = \xi \sin\theta$,~$x = \xi \cos\theta$ so that~$\partial B$ lies at~$\xi = 0$.  Then the metric becomes
\be
ds^2_N = \frac{\ell^2}{\xi^2 \sin^2\theta}\left[d\xi^2 + \xi^2 d\theta^2 + \left(1 - 2 \xi\cos\theta \, a(y)\right)dy^2 + \cdots \right],
\ee
where~$\cdots$ are terms subleading in~$\xi$.  In these coordinates, it is straightforward to study the behavior of a general IMC flow near~$\partial B$.  First, recall that by definition~\ref{def:boundaryIMC}, the flow time~$\tau$ can be written as
\be
\tau(\xi,\theta,y) = f(\theta) + \tau_\mathrm{reg}(\xi,\theta,y).
\ee
The requirement that~$\tau_\mathrm{reg}$ be single-valued at~$\xi = 0$ implies that~$\tau_\mathrm{reg}|_{\xi = 0} = c_0(y)$ for some~$c_0(y)$; the requirement that~$|\gradt \tau_\mathrm{reg}|$ be finite with respect to the conformally compactified metric~$\tilde{h}_{ab} \equiv (z/\ell)^2 h_{ab}$ is consistent with single-valuedness and additionally implies that~$\partial_\xi \tau_\mathrm{reg}$ and~$\partial_y \tau_\mathrm{reg}$ be finite at~$\xi = 0$.  These requirements thus imply the expansion
\be
\label{eq:flowsurface}
\tau(\xi,\theta,y) = f(\theta) + c_0(y) + \xi \tau_1(\theta,y) + o(\xi),
\ee
where we use little-o notation, defined here as~$\lim_{\xi \to 0} \xi^{-p} o(\xi^p) = 0$.

Then the flow equation~\eqref{eq:flowpde}, which can be thought of as an evolution equation in~$\theta$, allows us to solve for~$f(\theta)$ and~$\tau_1$.  We obtain
\be
\label{eq:taui}
f(\theta) = \ln\csc^2\theta, \qquad \tau_1(\theta,y) = c_1(y) - a(y) \cos\theta,
\ee
where like~$c_0(y)$,~$c_1(y)$ is an unconstrained constant of integration that can be specified by providing initial data for the flow (the constant of integration from~$f(\theta)$ can be absorbed into~$c_0(y)$, so we omit it).  Note that unsurprisingly, the leading-order term~$f(\theta)$ reproduces the pure AdS flow~\eqref{eq:halfplaneflow} found above.

With the solutions~\eqref{eq:taui}, we may now compute the asymptotic behavior of all the geometric objects necessary to evaluate the boundary terms~$I_1$ and~$I_2$.  The unit normal vector to the flow is given by~$n_a = \grad_a \tau/|\grad \tau|$, which in these coordinates we find has components
\begin{subequations}
\be
\begin{pmatrix} n^\xi \\ n^\theta \\ n^y \end{pmatrix} = \frac{\sin\theta}{2\ell} \begin{pmatrix} \tan\theta (c_1(y) - a(y)\cos\theta)\xi^2 + o(\xi^2) \\ -2 + o(\xi) \\ c_0'(y) \tan\theta \, \xi^2 + o(\xi^2) \end{pmatrix}.
\ee
To obtain the vector fields~$u^a$ and~$v^a$, recall that the cutoff curves~$\partial \Sigma^\eps$ were required to approach~$\partial B$ as~$\eps \to 0$.  More precisely, this implies that~$\tilde{h}_{ab} \tilde{v}^a (\partial_y)^b|_{\partial B} = 0$, where~$\tilde{v}^a \equiv \ell v^a/z$ is the unit (with respect to~$\tilde{h}_{ab}$) outward-pointing normal vector to~$\partial B$ on~$\Sigma$ in the conformally compactified space (which is regular at~$\partial B$).  In terms of the components of~$v^a$ in the uncompactified space, this condition requires~$v^y = o(\xi)$.  Combined with the conditions that~$v^a$ and~$u^a$ be normalized and orthogonal to each other and to~$n^a$, we then obtain
\bea
\begin{pmatrix} v^\xi \\ v^\theta \\ v^y \end{pmatrix} &= \frac{\sin\theta}{2\ell} \begin{pmatrix} -2\xi + o(\xi^2) \\ (a(y) \sin\theta - c_1(y) \tan\theta)\xi + o(\xi) \\ \alpha(\theta,y) \xi^2 + o(\xi^2) \end{pmatrix}, \\
\begin{pmatrix} u^\xi \\ u^\theta \\ u^y \end{pmatrix} &= \frac{\sin\theta}{2\ell} \begin{pmatrix} \alpha(\theta,y) \xi^2 + o(\xi^2) \\ c_0'(y) \tan\theta \, \xi + o(\xi) \\ 2 \xi + 2a(y) \cos\theta \, \xi^2 + o(\xi^2) \end{pmatrix},
\eea
\end{subequations}
where~$\alpha(\theta,y)$ is an arbitrary function which encodes the freedom in choosing the family of cutoff surfaces~$\partial \Sigma^\eps$.

The mean curvature of~$\Sigma$ and the geodesic curvature of~$\partial \Sigma^\eps$ are then
\begin{subequations}
\bea
^\Sigma \! K &\equiv \grad_a n^a = \frac{1}{\ell}\left[2\cos\theta - a(y) \sin^2\theta \, \xi + o(\xi)\right], \\
k_g &\equiv v_a u^b \grad_b u^a = -\frac{\sin\theta}{2\ell}\left[2 + (c_1(y) + a(y) \cos\theta)\xi + o(\xi)\right].
\eea
\end{subequations}
We may now easily evaluate the integrands of the boundary terms~\eqref{eqs:I12}; we obtain
\begin{subequations}
\bea
\frac{1}{^\Sigma \! K} \left(-2 v^a \grad_a \, ^\Sigma \! K +  4n^a \grad_a k_g - 4 k_g n^a a_a\right) = o(\xi), \label{subeq:I1integrand} \\
\frac{4}{\ell^2} - \, ^\Sigma \! K^2 - 4k_g^2 + 4v^b \grad_b k_g = o(\xi). \label{subeq:I2integrand}
\eea
\end{subequations}
Now, the measure in~$I_1$ and~$I_2$ is the proper length element~$ds$ along~$\partial \Sigma^\eps$, which is related to the coordinate length element~$dy$ as
\be
ds = \left(\frac{1}{\xi \sin\theta} + \Ocal(1)\right) dy.
\ee
Recalling that~$y$ has finite range, combining with~\eqref{subeq:I1integrand} immediately yields
\be
I_1 = \int_{\partial \Sigma^\eps} o(\xi^0) \, dy \to 0 \mbox{ as } \eps \to 0.
\ee
(Recall that~$\xi \to 0$ as~$\eps \to 0$.)  Next, we require that~$\tau^a_\parallel$ (which may be chosen arbitrarily) be finite in the limit~$\xi \to 0$.  This implies that~$v_a \tau^a_\parallel = \Ocal(\xi^0)$, and so using~\eqref{subeq:I2integrand} we also obtain
\be
I_2 = \int_{\partial \Sigma^\eps} o(\xi^0) \, dy \to 0 \mbox{ as } \eps \to 0.
\ee
Thus~$I_1$ and~$I_2$ vanish on any regular boundary-anchored flow; comparison with~\eqref{eq:dotredHawkingmass2} completes the proof.
\end{proof}

Having shown that boundary terms don't contribute to the evolution of a regular boundary-anchored flow, it is not much work to show that in fact the only regular boundary-anchored IMC flow that saturates the inequality~\eqref{eq:Idotbndry} is the flow~\eqref{eq:halfplaneflow} in pure AdS:

\begin{lem}
\label{lem:saturate}
The inequality~\eqref{eq:Idotbndry} is saturated along a regular boundary-anchored IMC flow if and only if~$N$ is (a portion of) a static slice of pure AdS and the~$\Sigma_\tau$ are homogeneous hyperbolic slices.
\end{lem}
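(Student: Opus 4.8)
The plan is to prove both implications, the forward ($\Leftarrow$) one being essentially a bookkeeping exercise and the converse requiring a rigidity argument. For ($\Leftarrow$): if $N$ is (a portion of) a static slice of pure AdS then it is totally geodesic in $M$ with induced metric that of $\mathbb{H}^3$ of radius $\ell$, so $^N\!K_{ab}=0$ and, via the Hamiltonian and momentum constraints, $T_{ab}t^at^b=0$; and if the $\Sigma_\tau$ are the homogeneous hyperbolic (equidistant) slices then they are umbilic with mean curvature constant on each leaf, so $^\Sigma\widehat{K}_{ab}=0$ and $D_a\ln{}^\Sigma\!K=0$. Every term on the right-hand side of \eqref{subeq:Idotint} then vanishes pointwise, and since this flow is a regular boundary-anchored flow (Section~\ref{subsec:bndryflow}) the proof of Theorem~\ref{thm:monotonicity} kills the boundary terms $I_1,I_2$; hence \eqref{eq:Idotbndry} is saturated. (Equivalently, this is just the computation of Section~\ref{subsec:boundaryint} run in reverse.)

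For the converse I would first exploit that the proof of Theorem~\ref{thm:monotonicity} actually establishes an equality, not merely an inequality: on a regular boundary-anchored flow $I_1=I_2=0$, so combining \eqref{subeq:firstterm} with the evolution of $k_g$ and with the Gauss--Bonnet expression \eqref{eq:redHawkingmass2} for $I_H$ yields $\dot I_H[\Sigma]+\tfrac12 I_H[\Sigma]=\int_\Sigma\bigl(|{}^N\!K_{ab}|^2+|{}^\Sigma\widehat{K}_{ab}|^2+2|D_a\ln{}^\Sigma\!K|^2+16\pi G_N\,T_{ab}t^at^b\bigr)$. Because the $\Sigma_\tau$ foliate $N$ and, by the weak energy condition, each term in the integrand is pointwise non-negative, saturation forces throughout $N$: (a) $^N\!K_{ab}=0$, i.e.\ $N$ is totally geodesic in $M$; (b) $T_{ab}t^at^b=0$, which with the WEC and the momentum constraint on the maximal slice upgrades to $T_{ab}t^a=0$; (c) $^\Sigma\widehat{K}_{ab}=0$, i.e.\ each flow surface is umbilic in $N$; and (d) $D_a\ln{}^\Sigma\!K=0$, i.e.\ $^\Sigma\!K$ is constant on each leaf.

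The remainder is rigidity. Feeding the umbilic form $^\Sigma\!K_{ab}=\tfrac12{}^\Sigma\!K\sigma_{ab}$ into \eqref{eq:sigmadot} gives $\dot\sigma_{ab}=\sigma_{ab}$, hence $\sigma_{ab}(\tau)=e^\tau\bar\sigma_{ab}$ for a fixed $2$-metric $\bar\sigma_{ab}$, while (d) makes $^\Sigma\!K$ a function of $\tau$ alone; in the flow-adapted coordinates used for the compact case (where $h_{\tau i}=0$, $h_{\tau\tau}=({}^\Sigma\!K)^{-2}$) the metric on $N$ is therefore the warped product $h=({}^\Sigma\!K(\tau))^{-2}d\tau^2+e^\tau\bar\sigma_{ij}d\phi^i d\phi^j$. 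I would then pin down $^\Sigma\!K(\tau)$ using the near-$\partial B$ analysis already done inside the proof of Theorem~\ref{thm:monotonicity}: that analysis shows that for \emph{every} regular boundary-anchored flow, on the leaf $\Sigma_\tau$ one has $^\Sigma\!K\to\tfrac2\ell\sqrt{1-e^{-\tau}}$ as $\partial B$ is approached (after absorbing the integration constant of $f(\theta)$ and the $c_0(y)$ of \eqref{eq:flowsurface} into a shift of $\tau$ — which (d) forces to be $y$-independent). Since by (d) $^\Sigma\!K$ is constant over the whole leaf, this boundary value \emph{is} the value everywhere: $^\Sigma\!K(\tau)=\tfrac2\ell\sqrt{1-e^{-\tau}}$ identically. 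Finally (a), (b) and Gauss--Codazzi \eqref{eq:GaussCodazzi} give $^N\!R=-6/\ell^2$; inserting the now completely explicit warped metric into its scalar curvature and demanding this at every point forces $\bar\sigma_{ab}$ to have constant curvature $-1/\ell^2$ (a $\tau$-independent but leaf-varying $^{\bar\sigma}\!R$ would make $^N\!R$ vary over the leaves). Then $h$ is manifestly the metric on a portion of $\mathbb{H}^3$ of radius $\ell$ written in the coordinates adapted to the foliation by surfaces equidistant from a totally geodesic plane — i.e.\ $N$ is (a portion of) a static slice of pure AdS and the $\Sigma_\tau$ are homogeneous hyperbolic slices, reproducing the flow \eqref{eq:halfplaneflow}.

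The step I expect to be the main obstacle is precisely this rigidity: ``warped product with constant scalar curvature $-6/\ell^2$ and constant-curvature leaves'' is strictly weaker than ``$\mathbb{H}^3$'' — solving $^N\!R=-6/\ell^2$ alone leaves a one-parameter family $^\Sigma\!K^2=\tfrac4{\ell^2}-12\kappa e^{-\tau}+A e^{-3\tau/2}$ of metrics with constant scalar but non-constant sectional curvature. What closes the gap is that the near-$\partial B$ asymptotics of Theorem~\ref{thm:monotonicity}, together with (d), determine $^\Sigma\!K(\tau)$ outright (equivalently pin $A=0$, $\kappa=1/3\ell^2$), leaving no freedom; executing this matching carefully, and tracking the $\tau$-shift and $c_0(y)$ ambiguities in Definition~\ref{def:boundaryIMC}, is where the real care lies. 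I would also flag that ``static slice of pure AdS'' here refers only to the intrinsic and extrinsic geometry of $N$ within $M$ — the monotone quantities never see the spacetime off $N$, so the claim does not (and need not) assert that the full bulk is pure AdS.
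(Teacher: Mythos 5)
Your proof is correct, and its second half takes a genuinely different route to the rigidity than the paper does. Up to the warped-product form $h = ({}^\Sigma K(\tau))^{-2}d\tau^2 + e^\tau\bar\sigma_{ij}dx^i dx^j$ the two arguments coincide: both note that, since the boundary terms vanish, saturation forces the pointwise vanishing of $|{}^N K_{ab}|^2$, $|{}^\Sigma\widehat K_{ab}|^2$, $|D_a\ln{}^\Sigma K|^2$ and $T_{ab}t^at^b$ in \eqref{eq:Idot}, and both then use \eqref{eq:sigmadot} and the constraint $^N\!R=-6/\ell^2$. Where you diverge is in fixing the residual freedom. The paper solves the resulting ODE \eqref{eq:Kdiffeq} in full generality, obtaining ${}^\Sigma K^2=\tfrac{4}{\ell^2}\bigl(1+ke^{-\tau}-\tfrac{2m}{\ell}e^{-3\tau/2}\bigr)$, recognizes static Schwarzschild--AdS slices, and then eliminates the unwanted cases using the boundary-anchoring hypothesis globally: $k=+1$ is excluded because compact spheres cannot be anchored, $m\neq 0$ is excluded because the (constant) integrand of $I_H$ would make $I_H[\Sigma]$ infinite on a non-compact leaf, contradicting the finiteness result of Appendix~\ref{app:massreg}, and $k=0$ is excluded because planar slices meet the boundary only at a point. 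You instead import the universal near-$\partial B$ expansion derived inside the proof of Theorem~\ref{thm:monotonicity}, which gives ${}^\Sigma K\to\tfrac{2}{\ell}\sqrt{1-e^{-(\tau-c_0(y))}}$ along each leaf at the boundary; constancy of ${}^\Sigma K$ on each leaf then forces $c_0$ to be constant and pins ${}^\Sigma K(\tau)$ everywhere, after which \eqref{eq:Kdiffeq} forces $\bar R=-2/\ell^2$, selecting $k=-1$, $m=0$ directly and bypassing the paper's case analysis entirely. Both closures are valid; the paper's buys a self-contained classification (and makes transparent exactly which geometric features of boundary-anchoring kill each case), while yours is shorter and makes clear that the asymptotics already computed in Theorem~\ref{thm:monotonicity} contain enough information to rigidify the flow, at the price of leaning on that expansion plus continuity of ${}^\Sigma K$ up to $\partial B$. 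Two minor remarks: the upgrade $T_{ab}t^at^b=0\Rightarrow T_{ab}t^a=0$ is true (it follows from the weak energy condition alone) but is never needed; and your closing caveat that the conclusion concerns only the intrinsic and extrinsic geometry of $N$, not the full bulk, is a fair reading of what the paper's own proof establishes, since it too only reconstructs the slice geometry together with $^N\!K_{ab}=0$ and $T_{ab}t^at^b=0$ on $N$.
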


\begin{proof}
Because boundary terms do not contribute to~$\dot{I}_H$, we have from~\eqref{eq:Idot} that the inequality will be saturated if and only if~$^N \! K_{ab}$,~$^\Sigma \! \widehat{K}_{ab}$, and~$T_{ab} t^a t^b$ vanish and~$^\Sigma \! K$ is constant on each~$\Sigma_\tau$.  It is convenient to work with an ADM-like decomposition of the metric on~$N$ adapted to the flow:
\be
\label{eq:saturateslice}
ds^2_N = \frac{d\tau^2}{^\Sigma \! K^2} + \sigma_{ij}(\tau,x) dx^i \, dx^j,
\ee
where~$\tau$ is the flow time,~$x^i$ are coordinates on the~$\Sigma_\tau$, and~$\sigma_{ij}$ are the components of the induced metric on the~$\Sigma_\tau$.  Note that the~$g_{\tau\tau}$ metric component above was fixed by the flow equation~\eqref{eq:flow}.

We then find
\be
^\Sigma \! \widehat{K}_{ij} = \frac{^\Sigma \! K}{2} \left(\sigma_{ij}' - \sigma_{ij}\right),
\ee
where a prime denotes a~$\tau$ derivative.  Thus the requirement that~$^\Sigma \! \widehat{K}_{ij}$ vanish implies~$\sigma_{ij}(\tau,x) = e^{\tau} \, \bar{\sigma}_{ij}(x)$ for some~$\tau$-independent metric~$\bar{\sigma}_{ij}$.  Next, from~\eqref{eq:GaussCodazzi} the vanishing of~$T_{ab} t^a t^b$ and~$^N \! K_{ab}$ imply~$^N \! R =  -6/\ell^2$, which using~\eqref{eq:saturateslice} yields
\be
\label{eq:Kdiffeq}
\left( ^\Sigma \! K^2 \right)' + \frac{3}{2} \, ^\Sigma \! K^2 - e^{-\tau} \, ^\Sigma \! \bar{R}(x) = \frac{6}{\ell^2},
\ee
where~$^\Sigma \! \bar{R}$ is the Ricci scalar of~$\bar{\sigma}_{ij}$.  Since~$^\Sigma \! K$ is independent of~$x$, the above implies that~$^\Sigma \! \bar{R}$ is in fact constant, which we rescale (via a shift in~$\tau$) to~$^\Sigma \! \bar{R} = 2k/\ell^2$ for~$k = 0$ or~$\pm 1$.  Thus the~$\Sigma_\tau$ are (locally) homogeneous spaces.  Solving~\eqref{eq:Kdiffeq} then yields
\be
^\Sigma \! K^2 = \frac{4}{\ell^2}\left(1 + k e^{-\tau} - \frac{2m}{\ell} e^{-3\tau/2}\right),
\ee
where~$m$ is a constant of integration.  Inserting this expression back into~\eqref{eq:saturateslice} and converting to a new coordinate~$r = \ell e^{\tau/2}$, we finally obtain
\be
ds^2_N = \frac{dr^2}{r^2/\ell^2 + k - 2m/r} + r^2 dH^2_k,
\ee
where~$dH^2_k$ is the metric on a homogeneous two-dimensional space with Ricci scalar~$2k$.  We immediately recognize the above as the metric on a static slice of Schwarzschild-AdS.

\begin{figure}[t]
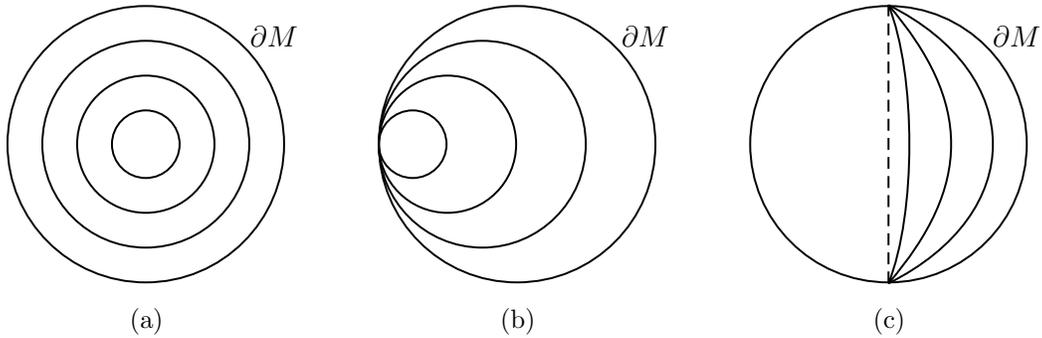

\centering
\subfigure[]{
\includegraphics[width=0.25\textwidth,page=8]{Figures-pics}
\label{subfig:sphere}
}
\hspace{0.5cm}
\subfigure[]{
\includegraphics[width=0.25\textwidth,page=9]{Figures-pics}
\label{subfig:plane}
}
\hspace{0.5cm}
\subfigure[]{
\includegraphics[width=0.25\textwidth,page=10]{Figures-pics}
\label{subfig:hyperplane}
}
\caption{A static time slice of pure AdS and homogeneous IMC flows on it.  For~$k = 1$ (left), the flow surface are spheres, which are not boundary anchored.  For~$k = 0$ (middle), the flow surface are planar slices of Poincar\'e AdS, which reach the boundary only at a point; they do not constitute a regular boundary-anchored flow.  For~$k = -1$ (right), the flow surfaces are hyperbolic slices of a Rindler patch of AdS, which are anchored to the boundary on the same curve as the Rindler horizon (dashed).}
\label{fig:AdSIMCs}
\end{figure}

To check which of these flows is boundary-anchored on a curve~$\partial B$, we may immediately exclude the case~$k = 1$ of spherical flow surfaces, since spheres have finite area and therefore cannot be boundary-anchored (Figure~\ref{subfig:sphere}).  Next, recall from Appendix~\ref{app:massreg} that~$I_H[\Sigma]$ must be finite on any boundary-anchored surface.  In the cases~$k = 0$,~$-1$, the space~$dH^2_k$ is not compact\footnote{One could consider compact quotients, but the corresponding flows cannot be boundary-anchored.}, and thus~$I_H[\Sigma]$ will only be finite if its integrand (which is constant on each flow surface) vanishes.  For the flows above, we find that
\be
2 \, ^\Sigma \! R + \frac{4}{\ell^2} - \, ^\Sigma \! K^2 = \frac{8m}{\ell^3} e^{-3\tau/2},
\ee
and thus a necessary condition for the flow to be boundary-anchored is~$m = 0$; i.e.~in these coordinates,~$N$ must be a Poincar\'e ($k = 0$) or Rindler ($k = -1$) slice of pure AdS.  But flat spatial slices of the Poincar\'e patch reach the AdS boundary only at a point (Figure~\ref{subfig:plane}), violating the requirement that~$\partial B$ be a curve; thus~$k = 0$ is excluded as well.  The only remaining case is~$k = -1$, where the flow surfaces are anchored at the boundary on the same curve as (the bifurcation surface of) the corresponding Rindler horizon (Figure~\ref{subfig:hyperplane}); in a different coordinate system, this is exactly the flow~\eqref{eq:halfplaneflow} constructed above, which is clearly a regular boundary-anchored flow.  (Note that~$\partial B$ has infinite extent in the natural conformal frame associated with the coordinates of~\eqref{eq:PoincareAdS}, but this is easily remedied by switching to a different frame which is regular everywhere on the line~$\xi = 0$.)
\end{proof}

\section{A Bound on Energy Density and Entanglement Entropy}
\label{sec:bounds}

We may now exploit the monotonicity of the reduced Hawking mass under boundary-anchored IMC flows to obtain our bound.  As mentioned in Section~\ref{sec:intro}, the idea is to consider the IMC flow along a maximal-volume slice of an entanglement wedge.  By starting the flow on the HRT surface and letting it flow to the boundary, the monotonicity property of Theorem~\ref{thm:monotonicity} yields an inequality between the renormalized area of the HRT surface and of a weighted integral of the local energy density on the boundary.  In Section~\eqref{subsec:renarea} we showed that the reduced Hawking mass on any boundary-anchored extremal surface is
\be
\label{eq:initialmH}
I_H[\Sigma] = 4\left(2\pi \chi_\Sigma + \frac{\Acal[\Sigma]}{\ell^2}\right),
\ee
where~$\chi_\Sigma$ is the Euler characteristic of~$\Sigma$ and~$\Acal[\Sigma]$ is the renormalized area of~$\Sigma$.  We now compute the asymptotic behavior of the reduced Hawking mass, and thereby obtain the promised bound.

\subsection{Asymptotic Hawking Mass}
\label{subsec:finalmH}

\begin{figure}[t]
\centering
\includegraphics[width=0.6\textwidth,page=11]{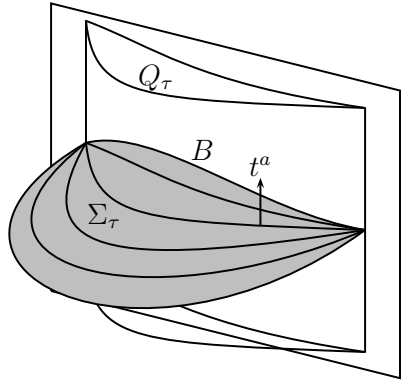}
\caption{The surfaces used in the statement and proof of Lemma~\ref{lem:finalmH}.}
\label{fig:finalmH}
\end{figure}

\begin{lem}
\label{lem:finalmH}
Let~$(M,g_{ab})$ be a (3+1)-dimensional asymptotically locally AdS spacetime which satisfies Einstein's equation~\eqref{eq:Einstein} with a stress tensor~$T_{ab}$ whose components in any orthonormal frame decay near~$\partial M$ as~$T_{\hat{\mu} \hat{\nu}} = o(\Omega^3)$.  Then assume the following objects, depicted in Figure~\ref{fig:finalmH}, exist:
\begin{compactitem}
	\item A two-dimensional surface~$B \subset \partial M$ on a moment of time symmetry of the boundary (that is,~$B$ lies on a moment of time symmetry of some representative of the conformal class of the boundary);
	\item A time slice~$N$ of the bulk which intersects~$B$ and is extremal in a neighborhood of~$B$; and
	\item A regular boundary-anchored IMC flow~$\Sigma_\tau$ on~$N$ anchored on~$\partial B$ and asymptoting to~$B$.
\end{compactitem}
Then the asymptotic behavior of the reduced Hawking mass of the flow surfaces obeys
\be
\label{eq:finalmH}
\lim_{\tau \to \infty} e^{\tau/2} I_H[\Sigma_\tau] = \frac{32\pi G_N}{\ell^2} \int_B \omega \left\langle \varepsilon \right\rangle,
\ee
where the integral is evaluated on~$B$ in any conformal frame with associated defining function~$\Omega$;~$\langle \varepsilon \rangle$ is (the expectation value of) the local energy density on~$B$ in this frame as measured by observers moving orthogonally to~$B$; and
\be
\omega \equiv \lim_{\tau \to \infty} \ell \, e^{\tau/2} \, \Omega|_{\Sigma_\tau}.
\ee
Moreover, as long as the frame defined by~$\Omega$ is regular everywhere on~$B \cup \partial B$,~$\omega$ is finite and positive in~$B$, and vanishes on~$\partial B$.
\end{lem}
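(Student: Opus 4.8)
The plan is to evaluate both sides of~\eqref{eq:finalmH} through a Fefferman--Graham expansion of the geometry of~$N$ near~$B$; the lemma then says that the asymptotic reduced Hawking mass of a boundary-anchored IMC flow is a weighted integral of the holographic energy density -- the asymptotically locally AdS analogue of the statement that the Hawking mass approaches the ADM mass in the asymptotically flat case. (That the limit exists is in any case consistent with the monotonicity of Theorem~\ref{thm:monotonicity}, but the computation below establishes it directly.)

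First I would set up adapted coordinates. Since~$N$ is extremal in a neighborhood of~$B$ it meets~$\partial M$ orthogonally by Lemma~\ref{lem:extremalorthog}, so fixing a conformal frame with defining function~$\Omega$ I may introduce a Fefferman--Graham coordinate~$z$ for~$N$ (with~$\Omega \propto z/\ell$ near~$\partial M$) together with coordinates~$x^i$ on~$B$, in which
\[
ds^2_N = \frac{\ell^2}{z^2}\left[dz^2 + \left(\gamma^{(0)}_{ij}(x) + z^2 \gamma^{(2)}_{ij}(x) + z^3 \gamma^{(3)}_{ij}(x) + o(z^3)\right) dx^i\,dx^j\right].
\]
There is no term linear in~$z$, the coefficient~$\gamma^{(2)}$ is local in~$\gamma^{(0)}$, and the~$o(z^3)$ remainder is controlled by the hypothesis~$T_{\hat{\mu}\hat{\nu}} = o(\Omega^3)$; crucially, because~$N$ is maximal and~$B$ lies on a moment of time symmetry, the holographic renormalization dictionary identifies~$\gamma^{(3)}$ with the boundary stress-energy data and in particular expresses~$\langle\varepsilon\rangle$ in terms of it, up to a numerical constant which must be tracked.

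Next I would solve the flow equation~\eqref{eq:flowpde} perturbatively in~$e^{-\tau/2}$ in the interior of~$B$, exactly as the near-$\partial B$ expansion was carried out in the proof of Theorem~\ref{thm:monotonicity}. At leading order the flow surfaces obey~$z|_{\Sigma_\tau} = \omega(x)\,e^{-\tau/2}(1+o(1))$, with~$\omega$ the positive function determined by the flow, which is precisely~$\omega = \lim_{\tau\to\infty}\ell\,e^{\tau/2}\,\Omega|_{\Sigma_\tau}$; solving to the next two orders expresses the subleading coefficients of~$z|_{\Sigma_\tau}$ in terms of~$\gamma^{(2)}$ and~$\gamma^{(3)}$ and gives the expansions of~${}^\Sigma \! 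K$ and~${}^\Sigma \! R$ to the order required. Inserting these into~$I_H[\Sigma_\tau] = \int_{\Sigma_\tau}\left(2 \, ^\Sigma \! R + 4/\ell^2 - \, ^\Sigma \! K^2\right)$, the~$O(e^0)$ and~$O(e^{-\tau})$ pieces of the integrand cancel; the~$O(e^{-\tau})$ cancellation is the nontrivial one, since it plays the curvature correction~$\gamma^{(2)}$ off against the deviation of~$\Sigma_\tau$ from a constant-$z$ surface, and is exactly the cancellation the~${}^\Sigma \! K^2$ term in~$I_H$ was engineered to produce (it is already visible in the half-plane flow~\eqref{eq:halfplaneflow}, where the integrand vanishes identically). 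What remains is an~$O(e^{-3\tau/2})$ term proportional to~$\gamma^{(3)}$; multiplying by~$e^{\tau/2}$ and by the induced area element~$\sqrt{\sigma}\,d^2x = (\ell^2/\omega^2)\,e^{\tau}\sqrt{\gamma^{(0)}}\,d^2x\,(1+o(1))$ and sending~$\tau\to\infty$ yields~$\frac{32\pi G_N}{\ell^2}\int_B \omega\,\langle\varepsilon\rangle$, with a single net power of~$\omega$ surviving the power counting. One should also check that the near-$\partial B$ region of the flow, where the surfaces pinch onto~$\partial B$, contributes negligibly to this limit -- which is safe precisely because~$\omega\to0$ there and the integrand of~$I_H$ vanishes rapidly near~$\partial B$.

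Finally, the stated properties of~$\omega$ follow from the same expansion: finiteness and positivity of~$\omega$ in the interior of~$B$ are immediate, since~$\Sigma_\tau$ is a smooth surface genuinely reaching~$B$ with~$z|_{\Sigma_\tau}>0$; and~$\omega\to0$ on~$\partial B$ follows from the near-$\partial B$ form of a regular boundary-anchored flow established in the proof of Theorem~\ref{thm:monotonicity}, where~$\tau = \ln\csc^2\theta + c_0(y) + \xi\,\tau_1 + o(\xi)$ with~$z = \xi\sin\theta$, so that~$\ell\,e^{\tau/2}\,\Omega \propto e^{\tau/2}z \sim \xi \to 0$ as one approaches~$\partial B$ (i.e.\ $\xi\to0$), provided the frame is regular on~$B\cup\partial B$ so these limits are well-defined. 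I expect the main obstacle to be the order-by-order solution of the nonlinear flow PDE near~$B$ together with the verification of the~$O(e^{-\tau})$ cancellation, and -- for pinning down the constant~$32\pi G_N/\ell^2$ -- carefully tracking the holographic-renormalization relation between~$\gamma^{(3)}$ on the maximal slice~$N$ and the boundary energy density~$\langle\varepsilon\rangle$.
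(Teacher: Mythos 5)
Your outline is sound and would get to \eqref{eq:finalmH}, but it is a genuinely different route from the paper's. The paper avoids Fefferman--Graham expansions and never solves the flow PDE in the interior of~$B$: it foliates the near-boundary region by timelike hypersurfaces~$Q_\tau$ orthogonal to~$N$ and meeting~$N$ on the~$\Sigma_\tau$, and uses the Balasubramanian--Kraus quasi-local stress tensor~\eqref{eq:BalaKraus} to write~$^\Sigma \! K = \tfrac{2}{\ell} - 8\pi G_N \T_{ab}t^a t^b + \ell\, ^Q \! G_{ab}t^at^b$ and~$^\Sigma \! R = 2\, ^Q \! G_{ab}t^at^b + |{}^N \! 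K_{ab}|^2$-terms (equations~\eqref{eq:KsigmaTab},~\eqref{eq:sigmaR}); the~$^Q\!G$ contributions cancel identically in~$2\,^\Sigma \! R + 4/\ell^2 - \,^\Sigma \! K^2$, leaving the pointwise statement~\eqref{eq:masymp} that the integrand is~$\tfrac{32\pi G_N}{\ell}\T_{ab}t^at^b + \Ocal(\Omega^4)$, and the limit then follows from~\eqref{eq:Tbndryfalloff} and the definition of~$\omega$. In that formulation the two steps you flag as the main obstacles --- the order-$e^{-\tau}$ cancellation (which in your language mixes~$\gamma^{(2)}$,~$R[\gamma^{(0)}]$ and derivatives of~$\omega$) and the normalization relating the cubic coefficient of the induced metric on the maximal slice to~$\langle\varepsilon\rangle$ --- are automatic, packaged into the counterterm structure that makes~$\T_{ab}t^at^b = \Ocal(\Omega^3)$, with the moment-of-time-symmetry and maximality hypotheses entering only through the falloff~$^N \! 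K_{\hat\mu\hat\nu} = \Ocal(\Omega^2)$ of Lemma~\ref{lem:Kfalloff}. Your coordinate route buys explicitness (it produces the flow's subleading shape and would pin down~$\omega$ more concretely, and your near-$\partial B$ argument for~$\omega|_{\partial B}=0$ via~$e^{\tau/2}z \sim \xi$ is actually more explicit than the paper's one-line remark), but it must actually carry out what you only assert: verify that no~$\Ocal(e^{-\tau/2})$ term appears in the integrand (not just the~$\Ocal(1)$ and~$\Ocal(e^{-\tau})$ pieces you mention --- such a term would make~$e^{\tau/2}I_H$ diverge), confirm the~$\Ocal(e^{-\tau})$ cancellation pointwise for a general flow profile, and track that it is the \emph{trace} of~$\gamma^{(3)}$ (equal to~$\langle\varepsilon\rangle$ up to the factor fixed by tracelessness of the boundary stress tensor and by~$T_{\hat\mu\hat\nu}=o(\Omega^3)$) that survives, including showing that the time-bending of the maximal slice~$N$ does not contaminate the cubic order. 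None of these should fail --- the covariant argument guarantees they work out --- but as written they are the substance of the proof rather than routine checks.
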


\begin{proof}
The derivation is a generalization of that in~\cite{FisHic16}.  First, introduce a foliation of the near-boundary region with timelike hypersurfaces~$Q_\tau$, as shown in Figure~\ref{fig:finalmH}.  We take the~$Q_\tau$ to be normal to~$N$ and to intersect~$N$ on the IMC flow surfaces~$\Sigma_\tau$.  On these~$Q_\tau$ we may construct the Balasubramanian-Kraus stress tensor~\cite{BalKra99} (see also~\cite{deHSol00})\footnote{Because the bulk stress tensor falls off like~$o(\Omega^3)$,~$\T_{ab}$ gets no contribution from bulk matter counterterms.}
\be
\label{eq:BalaKraus}
\T_{ab} = \frac{1}{8\pi G_N}\left(- \, ^Q \! K_{ab} + \, ^Q \! K q_{ab} - \frac{2}{\ell} q_{ab} + \ell \, ^Q \! G_{ab}\right),
\ee
where~$q_{ab}$ is the induced metric on the~$Q_\tau$ and~$^Q \! K_{ab}$ and~$^Q \! G_{ab}$ are their extrinsic curvature and Einstein tensor, respectively.  Note that as defined,~$\T_{ab}$ is a bulk object, but it asymptotically approaches the boundary stress tensor (up to a conformal factor) as the~$Q_\tau$ approach the boundary.  Specifically, in a conformal frame specified by a defining function~$\Omega$, we have\footnote{This expression follows from the fact that by construction, the density~$\sqrt{\sigma} \T_{ab} t^a \xi^b$ is a conformal invariant for any~$\xi^a$ left unchanged by conformal transformations, so that when~$\xi^a$ is a conformal Killing vector field the integral of~$\sqrt{\sigma} \T_{ab} t^a \xi^b$ over a boundary Cauchy slice generates a conserved charge~\cite{BalKra99}.}
\be
\label{eq:Tbndryfalloff}
\lim_{\Omega \to 0} \Omega^{-3} \T_{ab} t^a t^b = \langle T^\mathrm{bndy}_{ab} \rangle \tilde{t}^a \tilde{t}^b \equiv \left\langle \varepsilon \right\rangle,
\ee
where~$\langle T^\mathrm{bndy}_{ab} \rangle$ is the expectation value of the boundary stress tensor in this frame and~$\langle \varepsilon \rangle$ is the corresponding energy density as measured by observers with velocity~$\tilde{t}^a$ normal to~$B$.  We pause here to note the additional falloffs
\be
\label{eq:GKfalloffs}
^Q \! G_{ab} t^a t^b = \Ocal(\Omega^2), \qquad \, ^N \! K_{\hat{\mu}\hat{\nu}} = \Ocal(\Omega^2).
\ee
The first follows straightforwardly from the conformal transformation properties of the Einstein tensor, while the second (for the components of~$^N \! K_{ab}$ in any orthonormal frame) follows from Lemma~\ref{lem:Kfalloff} in Appendix~\ref{app:Kconformal} and relies on the fact that~$B$ lies on a moment of time symmetry.

Now, the extrinsic curvature~$^\Sigma \! K_{ab}$ of the~$\Sigma_\tau$ in~$N$ is equal to the projection of the extrinsic curvature~$^Q \! K_{ab}$ onto~$N$; that is,
\be
^\Sigma \! K_{ab} = {\sigma_a}^c {\sigma_b}^d \, ^Q \! K_{cd},
\ee
where recall that~$\sigma_{ab} = q_{ab} + t_a t_b$ is the induced metric on the~$\Sigma_\tau$.  Thus the mean curvature of the~$\Sigma_\tau$ is related to that of the~$Q_\tau$ by
\be
^\Sigma \! K = \, ^Q \! K + \, ^Q \! K_{ab} t^a t^b.
\ee
This result then allows us to use~\eqref{eq:BalaKraus} to write the mean curvature of the~$\Sigma_\tau$ in terms of~$\T_{ab}$:
\be
\label{eq:KsigmaTab}
^\Sigma \! K = \frac{2}{\ell} - 8\pi G_N \T_{ab} t^a t^b + \ell \, ^Q \! G_{ab} t^a t^b.
\ee
Likewise, we may use the Gauss-Codazzi equations to express the Ricci scalar~$^\Sigma \! R$ on the~$\Sigma_\tau$ in terms of the geometry of the~$Q_\tau$; writing the result in terms of the extrinsic curvature of~$N$ and exploiting the fact that~$^N \! K = 0$, we obtain
\be
\label{eq:sigmaR}
^\Sigma \! R = 2 \, ^Q \! G_{ab} t^a t^b + \left| ^N \! K_{ab} \right|^2 - 2\left| ^N \! K_{ab} n^b \right|^2.
\ee
Inserting~\eqref{eq:KsigmaTab} and~\eqref{eq:sigmaR} into the definition~\eqref{eq:redHawkingmass} of the reduced Hawking mass and then using~\eqref{eq:GKfalloffs}, we obtain for arbitrary defining function~$\Omega$
\be
\label{eq:masymp}
I_H[\Sigma_\tau] = \frac{32\pi G_N}{\ell} \int_{\Sigma_\tau} \left(\T_{ab} t^a t^b + \Ocal(\Omega^4)\right).
\ee
Multiplying by~$e^{\tau/2}$, taking the limit~$\tau \to \infty$ (so~$\Omega \to 0$), and using~\eqref{eq:Tbndryfalloff}, we recover~\eqref{eq:finalmH}.

Finally, that~$\omega$ vanishes on~$\partial B$ follows immediately from the fact that at any finite~$\tau$,~$e^{\tau/2} \, \Omega|_{\partial B} = 0$ (assuming~$\Omega$, and therefore the conformal frame, is regular at~$\partial B$).  It is clearly positive in~$B$, and that it is finite follows from the asymptotics of the IMC flow studied in e.g.~\cite{LeeNev15,FisHic16}.
\end{proof}

Note the importance of~$B$ lying on a moment of time symmetry (though we emphasize that~$B$ only needs to lie on a moment of time symmetry of the boundary metric, not of the entire state).  This feature was necessary to guarantee that the terms in~\eqref{eq:sigmaR} involving~$^N \! K_{ab}$ vanish sufficiently rapidly near the boundary, as per Lemma~\ref{lem:Kfalloff}.  Without this condition, these terms would fall off too slowly and~$e^{\tau/2} I_H[\Sigma_\tau]$ would diverge.

\subsection{A Bound}
\label{subsec:bound}

\begin{thm}
\label{thm:bound}
Consider a (2+1)-dimensional holographic CFT whose bulk dual obeys Einstein's equation~\eqref{eq:Einstein} with a bulk stress tensor obeying the weak energy condition and whose components in any orthonormal frame fall off like~$T_{\hat{\mu}\hat{\nu}} = o(\Omega^3)$ near the asymptotic boundary.  Consider also a spatial region~$B$ on a moment of time symmetry of the CFT spacetime.  Assume the following objects exist:
\begin{compactitem}
	\item An extremal surface~$\Xcal_B$ with the same topology as~$B$ and anchored to~$\partial B$;
	\item A maximal-volume time slice~$N$ of the bulk with~$\partial N = \Xcal_B \cup B$; and
	\item A regular boundary-anchored IMC flow~$\Sigma_\tau$ on~$N$ with~$\Sigma_0 = \Xcal_B$ and~$\Sigma_{\tau \to \infty} = B$.
\end{compactitem}
Then
\be
\label{eq:bound}
S^\mathrm{ren}_B \leq \left\langle \Ecal_B \right\rangle - 8\pi^2 c_\mathrm{eff}\chi_B, \qquad \Ecal_B \equiv 2\pi \int_B \omega \varepsilon,
\ee
where~$\omega$ and~$\varepsilon$ are as defined in Lemma~\ref{lem:finalmH}, and~$\chi_B$ and~$S^\mathrm{ren}_B$ are the Euler characteristic and the renormalized entropy of~$B$, respectively.
\end{thm}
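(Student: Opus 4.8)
The proof is short once Theorem~\ref{thm:monotonicity} and Lemma~\ref{lem:finalmH} are in hand; the plan is to evaluate $e^{\tau/2} I_H[\Sigma_\tau]$ at the two ends of the flow and use monotonicity to compare them. At $\tau = 0$ the flow surface is the extremal surface $\Xcal_B$, which has vanishing mean curvature and the same topology as $B$, so equation~\eqref{eq:initialmH} (equivalently~\eqref{eq:ArenI}) gives $\left. e^{\tau/2} I_H[\Sigma_\tau]\right|_{\tau=0} = I_H[\Xcal_B] = 4\bigl(2\pi\chi_B + \Acal[\Xcal_B]/\ell^2\bigr)$, using $\chi_{\Xcal_B} = \chi_B$. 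The hypotheses of the present theorem are precisely those of Lemma~\ref{lem:finalmH} --- a time-symmetric $B$, a slice $N$ extremal near $B$, stress-tensor falloff $T_{\hat\mu\hat\nu} = o(\Omega^3)$, and a regular boundary-anchored flow asymptoting to $B$ --- so the other end of the flow gives $\lim_{\tau\to\infty} e^{\tau/2} I_H[\Sigma_\tau] = (32\pi G_N/\ell^2)\int_B \omega\langle\varepsilon\rangle$, which is finite.

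Then I would invoke Theorem~\ref{thm:monotonicity}, which states that $e^{\tau/2} I_H[\Sigma_\tau]$ is non-decreasing along the flow, to conclude
\be
4\left(2\pi\chi_B + \frac{\Acal[\Xcal_B]}{\ell^2}\right) \leq \frac{32\pi G_N}{\ell^2} \int_B \omega \langle\varepsilon\rangle,
\ee
that is, $\Acal[\Xcal_B] \leq 8\pi G_N \int_B \omega\langle\varepsilon\rangle - 2\pi\ell^2 \chi_B$. Since $\partial N = \Xcal_B \cup B$, the surface $\Xcal_B$ is homologous to $B$ and hence an admissible competitor in the minimization~\eqref{eq:HRT}; because the renormalized area $\Acal$ (defined via the Fefferman--Graham cutoff, or equivalently the geodesic-curvature counterterm, of Section~\ref{subsec:renarea}) is exactly the minimized quantity, we get $S^\mathrm{ren}_B = \Acal[X_B]/4G_N \leq \Acal[\Xcal_B]/4G_N$. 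Dividing the previous inequality by $4G_N$ and using $2\pi\ell^2/4G_N = 8\pi^2 c_\mathrm{eff}$ together with $\langle\Ecal_B\rangle = 2\pi\int_B \omega\langle\varepsilon\rangle$ produces~\eqref{eq:bound}.

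Since the arithmetic is elementary, the only points needing care are structural. First, one must ensure the flow's topology is constant, so that $\chi_{\Sigma_\tau} = \chi_B$ for all $\tau$ --- this is what licenses~\eqref{eq:initialmH} at $\tau = 0$ and is implicit in the derivation of Lemma~\ref{lem:finalmH}; it is guaranteed by the hypothesis that $\Sigma_\tau$ is a regular boundary-anchored flow interpolating between $\Xcal_B$ and $B$. Second, one should check that the $\tau\to\infty$ limit genuinely recovers $B$ in a conformal frame regular on $B\cup\partial B$, so that $\omega$ is finite and positive on $B$ and vanishes on $\partial B$ and $\langle\varepsilon\rangle$ is the physical boundary energy density; again this is built into the assumptions. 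I do not expect a serious obstacle --- the real work was already done in Theorem~\ref{thm:monotonicity} and Lemma~\ref{lem:finalmH} --- but the most delicate step is confirming that $e^{\tau/2} I_H[\Sigma_\tau]$ stays monotone all the way through the (mildly singular) near-$\partial B$ behavior of the flow, which is handled by the Definition~\ref{def:boundaryIMC} of a regular boundary-anchored flow underlying Theorem~\ref{thm:monotonicity}.
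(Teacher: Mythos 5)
Your proposal is correct and follows essentially the same route as the paper's proof: evaluate $e^{\tau/2}I_H[\Sigma_\tau]$ at $\tau=0$ via~\eqref{eq:initialmH} with $\chi_{\Xcal_B}=\chi_B$, at $\tau\to\infty$ via Lemma~\ref{lem:finalmH}, compare them with Theorem~\ref{thm:monotonicity}, and then pass from $\Acal[\Xcal_B]$ to $S^\mathrm{ren}_B$ using the HRT minimization together with the fact that surfaces anchored on the same $\partial B$ share the same UV-divergent area, so the renormalized areas inherit the inequality. The arithmetic, including $2\pi\ell^2/4G_N = 8\pi^2 c_\mathrm{eff}$, matches the paper.
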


\begin{proof}
Because~$N$ is an extremal surface,~$T_{ab}$ obeys the weak energy condition, and~$\Sigma_\tau$ is a regular boundary-anchored IMC flow, we may apply Theorem~\ref{thm:monotonicity} to conclude that
\be
\lim_{\tau \to \infty} e^{\tau/2} I_H[\Sigma_\tau] \geq I_H[\Sigma_0].
\ee
Combining~\eqref{eq:initialmH} and Lemma~\ref{lem:finalmH}, we thus obtain
\be
\label{eq:tempinequality}
\frac{32 \pi G_N}{\ell^2} \int_B \omega \langle\varepsilon\rangle \geq 4\left(2\pi \chi_{\Xcal_B} + \frac{\Acal[\Xcal_B]}{\ell^2}\right),
\ee
with~$\Acal[\Xcal_B]$ the renormalized area of~$\Xcal_B$.  But by construction,~$\Xcal_B$ has the same topology as~$B$, so~$\chi_{\Xcal_B} = \chi_B$.  The HRT formula also requires that the HRT surface~$X_B$ whose area gives the entanglement entropy of~$B$ have less area than any other extremal surface anchored to~$\partial B$, and thus~$A[\Xcal_B] \geq A[X_B]$.  But since both~$\Xcal_B$ and~$X_B$ are anchored at~$\partial B$, the UV-divergent pieces in their areas match, and thus the renormalized areas must also obey~$\Acal[\Xcal_B] \geq \Acal[X_B]$.  Writing~$S^\mathrm{ren}_B = \Acal[X_B]/4G_N$,~\eqref{eq:tempinequality} yields~\eqref{eq:bound}.
\end{proof}

Before moving on to interpretations of this result, let us briefly comment on the assumptions of the proof.  First, note that the HRT surface~$X_B$ need not always have the same topology as~$B$ (for instance, if~$B$ consists of a ``fat'' annulus), hence our use of~$\Xcal_B$ instead.  However, like~$X_B$,~$\Xcal_B$ will exist quite generically, as will the extremal time slice~$N$.  Indeed, in the particular case where~$\Xcal_B$ is the HRT surface, the homology constraint guarantees the existence of a spatial slice~$N_\mathrm{HRT}$ with boundary~$\Xcal_B \cup B$; by varying this slice in time, its area can be maximized, yielding the extremal surface~$N$.  However, we have left these assumptions explicit because there are known to be fine-tuned spacetimes exhibiting strong time dependence in the bulk where the HRT surface and volume-minimizing slices do not exist~\cite{FisMar14}.

Next, we note that for a classical bulk, the weak energy condition is a reasonable requirement on the bulk stress tensor~$T_{ab}$ (though not as desirable as, say, the null energy condition).  More relevantly, the falloff of~$T_{ab}$ near the boundary imposes constraints on any CFT operators dual to bulk matter fields.  As a brief example, consider the asymptotic behavior of a free bulk scalar field of mass~$m^2 = \Delta(\Delta-3)/\ell^2$ in some Fefferman-Graham coordinate~$z$ (see e.g.~\cite{FisKel12}):
\be
\label{eq:phiexpansion}
\phi = z^{3-\Delta}\left(\phi^{(0)} + z^2 \phi^{(2)} + \cdots\right) + \begin{cases} z^\Delta \ln z \left(\psi^{(0)} + z^2 \psi^{(2)} + \cdots\right) \mbox{ if } \Delta = \frac{2n+1}{2}, \\ z^\Delta \left(\phi^{(2\Delta -3)} + z^2 \phi^{(2\Delta -1)} + \cdots\right) \mbox{ otherwise},
\end{cases}
\ee
where~$\Delta \geq 3/2$ and~$n$ is a positive integer.  Consider the cases~$\Delta > 3/2$ and~$\Delta = 3/2$ separately.  For~$\Delta > 3/2$, standard quantization in the AdS/CFT dictionary dictates that~$\phi^{(0)}$ and~$\phi^{(2\Delta -3)}$  are the source and one-point function of a CFT scalar operator of conformal weight~$\Delta$, and determine all other terms in the above expansions.  Then very schematically, the
leading-order behavior of the orthonormal components of the bulk stress tensor is
\be
T_{\hat{\mu}\hat{\nu}} \sim z^{2(3-\Delta)} \left(\phi^{(0)}\right)^2 + \cdots,
\ee
and thus the falloff condition on~$T_{ab}$ requires~$\phi^{(0)} = 0$, and hence a vanishing source.  
In the case~$\Delta = 3/2$, the CFT source and one-point function are~$\phi^{(0)}$ and~$\psi^{(0)}$, which determine all other terms in~\eqref{eq:phiexpansion}; the falloff condition on the stress tensor imposes that both~$\phi^{(0)}$ and~$\psi^{(0)}$ must vanish, implying $\phi = 0$ in the bulk. 
Thus the falloff condition on~$T_{ab}$ requires that no CFT sources be turned on for scalar operators dual to free fields with conformal dimension~$\Delta > 3/2$, and that neither source nor expectation value be turned on for scalar operators of dimension~$\Delta = 3/2$\footnote{In the range~$3/2 < \Delta < 5/2$, one can choose an alternative quantization in which~$\phi^{(0)}$ and~$\phi^{(2\Delta -3)}$ swap roles for a scalar operator of dimension~$\overline{\Delta} = 3-\Delta$~\cite{KleWit99,Wit01b,HolWar13}.  In CFT parlance, one would then say that our bound applies to any \textit{state} in which no scalar operator of conformal dimension~$1/2 < \overline{\Delta} < 3/2$ has a nonzero one-point function (though a source may be turned on).  However, this is somewhat unnatural, as one would prefer to know whether or not the bound will hold in a particular \textit{theory}, which is fixed by sources, not one-point functions.}.  The essence of this argument is unchanged for other types of matter, so we expect these statements hold for more general CFT sources and operators as well.

Finally, the strongest assumption required to obtain the above bound is the existence of a smooth IMC flow from~$\Xcal_B$ to~$B$.  As mentioned in Section~\ref{sec:IMCF}, it is known that IMC may become singular, and thus one needs to instead consider weak flows.  When the flow surfaces are compact, these weak flows are also known to preserve monotonicity of the Hawking mass~\cite{HuiIlm01,LeeNev15}, yielding Penrose inequalities.  It is likely that these constructions extend to the flows considered here: some recent work~\cite{Mar15} has shown that in Euclidean space, unique solutions to the weak IMC flow can be found in which the flow surfaces are constrained to be normal to some boundary, and moreover~\cite{Mar15} suggests that a Hawking mass-like functional is monotonic along these weak flows.  We therefore find it likely that there exists a weak formulation of our boundary-anchored flows which preserves monotonicity of~$e^{\tau/2}I_H[\Sigma]$.

\section{Beyond the Classical Limit}
\label{sec:perturbative}

The derivation of the bound~\eqref{eq:bound} was purely classical; that is, it assumed the existence of a classical bulk spacetime obeying Einstein's equation sourced by classical matter.  But one crucial feature of strict inequalities is that they cannot be violated by (sufficiently small) perturbative corrections.  In the present context, this corresponds to an intuitive expectation that when~\eqref{eq:bound} is obeyed as a \textit{strict} inequality, perturbative corrections cannot violate it.  The situation is more subtle when the bound is saturated, however; thus our purpose is now to investigate this intuition more carefully and argue that even as a weak inequality,~\eqref{eq:bound} holds for any states which are finite-order perturbations of a classical state.  This result relies crucially on the positivity of relative entropy, which we now quickly review.

\subsection{Positivity of Relative Entropy and the First Law}
\label{subsec:relativeent}

Consider some CFT region~$B$ and any two states~$\rho_B$ and~$\sigma_B$ on~$B$.  One can then define the relative entropy of~$\rho_B$ and~$\sigma_B$ as
\be
\label{eq:relativeent}
S(\sigma_B | \rho_B) \equiv \Tr(\sigma_B \ln \sigma_B) - \Tr(\sigma_B \ln \rho_B).
\ee
The relative entropy~$S(\sigma_B | \rho_B)$ essentially measures the distinguishability of~$\rho_B$ and~$\sigma_B$, and it therefore has the property that it is positive whenever~$\rho_B$ and~$\sigma_B$ are distinct\footnote{Note that as is customary, here and below we will neglect any potential subtleties associated with~$\rho_B$ and~$\sigma_B$ living in uncountably infinite-dimensional Hilbert spaces.} (see e.g.~\cite{Weh78} for a review).  Defining the modular Hamiltonian of~$\rho_B$ as~$H_B = -\ln \rho_B$, this positivity property can be rewritten as
\be
\label{eq:nonlinfirstlaw}
\Delta \left\langle H_B \right\rangle \geq \Delta S_B,
\ee
where\footnote{We remind the reader that here~$H_B$ always refers to the modular Hamiltonian of~$\rho_B$, and never to that of~$\sigma_B$.}
\be
\Delta \left\langle H_B \right\rangle \equiv \Tr(\sigma_B H_B) - \Tr(\rho_B H_B), \qquad \Delta S_B \equiv S_B(\sigma_B) - S_B(\rho_B).
\ee
Note that because the region~$B$ is unchanged and the area-law divergence in~$S_B$ is state-independent, we have that~$\Delta S_B = \Delta S^\mathrm{ren}_B$.

As a special case, note that since the relative entropy is positive and vanishes when~$\sigma_B = \rho_B$, it must be stationary under perturbations~$\sigma_B = \rho_B + \delta \rho$ for some small~$\delta \rho$.  Then the inequality~\eqref{eq:nonlinfirstlaw} is saturated to first order in~$\delta \rho$:
\be
\label{eq:firstlaw}
\delta \left\langle H_B \right\rangle = \delta S_B + \Ocal(\delta^2).
\ee
This so-called first law of entanglement, derived in~\cite{BlaCas13}, can be interpreted as a version of the first law of thermodynamics~$dE = T dS$ for nearby equilibrium states.

\subsection{Saturating the Bound}
\label{subsec:pureAdS}

Interestingly, in a certain special case we can derive the first law~\eqref{eq:firstlaw}.  Doing so requires first saturating our bound~\eqref{eq:bound}, which from Lemma~\ref{lem:saturate} only happens for the pure AdS flow~\eqref{eq:halfplaneflow}.  The corresponding boundary region~$B$ is the half-plane~$x > 0$, but it will be cleaner to conformally map the half-plane to the disk via a bulk diffeomorphism.  Working with AdS in polar-Poincar\'e coordinates
\be
ds^2 = \frac{\ell^2}{z^2} \left( dz^2 + dr^2 + r^2 d\phi^2\right),
\ee
mapping the half-plane to the disk~$r < r_0$ yields the flow
\be
\tau = \ln\left(1+\frac{(r_0^2 - r^2 - z^2)^2}{4r_0^2 z^2}\right).
\ee
Since this flow saturates the bound, it immediately follows that the renormalized entanglement entropy of a disk in vacuum is
\be
\label{eq:Sdisk}
S^\mathrm{ren}_\mathrm{disk} = -8\pi^2 c_\mathrm{eff},
\ee
which reproduces the result of~\cite{TayWoo16}.

In fact, this result could have been obtained directly from the partial result~\eqref{eq:partialbound}.  The advantage of the flow is that it allows us to calculate the weighing function~$\omega$: in the conformal frame defined by~$\Omega = z/\ell$, we obtain
\be
\omega = \lim_{z \to 0} z e^{\tau/2} = \frac{r_0^2 - r^2}{2r_0},
\ee
producing the operator
\be
\label{eq:Edisk}
\Ecal_\mathrm{disk} = 2\pi \int_{r < r_0} d^2 x \,  \frac{r_0^2 - r^2}{2r_0} \, \varepsilon.
\ee
This can immediately be recognized as the vacuum modular Hamiltonian~$H_\mathrm{disk}$ of the disk (see e.g.~\cite{BlaCas13}), as might be expected based on the saturation of the bound.

The above observation allows us to derive the first law of entanglement for perturbations of the vacuum, which we can think of as a check of the HRT formula complementary to (and more general than) those performed in~\cite{BlaCas13}.  To do so, consider a perturbation of order~$\delta$ of the CFT vacuum state.  In the bulk dual, this corresponds to a perturbation of the bulk geometry and matter fields of order~$\delta$, and therefore a perturbation to~$|^N \! K_{ab}|^2$,~$|^\Sigma \! \widehat{K}_{ab}|^2$,~$|D_a \, ^\Sigma \!K|^2$, and~$T_{ab} t^a t^b$ of order~$\delta^2$ (this latter follows because the bulk stress tensor is at least quadratic in matter fields, which vanish in vacuum).  As a result, we find that~$\dot{I}_H + I_H/2 = \Ocal(\delta^2)$, and therefore the bound is saturated to linear order~$\delta$.  Using the fact that~$\langle \varepsilon \rangle = 0$ in the vacuum, one therefore finds that
\be
\delta S^\mathrm{ren}_\mathrm{disk} = \delta \left\langle \Ecal_\mathrm{disk} \right\rangle + \Ocal(\delta^2),
\ee
which immediately reproduces the first law~\eqref{eq:firstlaw} upon replacing~$\Ecal_\mathrm{disk}$ with~$H_\mathrm{disk}$ and~$\delta S^\mathrm{ren}_\mathrm{disk}$ with~$\delta S_\mathrm{disk}$.

\subsection{A Perturbative Bound}
\label{subsec:pertbound}

While we have shown our bound reproduces the first law of entanglement (for disk-shaped regions and perturbations of the vacuum) when the bulk is described by Einstein gravity, we can obtain a much more powerful result by proceeding in the opposite direction: by exploiting positivity of the relative entropy, we can push our bound beyond the classical limit.  For simplicity, let us restrict to~$1/N$ corrections, though the discussion will apply equally well to other perturbative corrections (e.g.~in~$\alpha'$).  We may then consider a state~$\rho_B$ which is a finite-order perturbation of a classical state, by which we mean that (i) there exists a classical geometry~$g^{(0)}_{ab}$, corresponding to the infinite-$N$ limit of~$\rho_B$, which obeys Einstein's equation~\eqref{eq:Einstein}, and (ii) expectation values in the state~$\rho_B$ can be expanded as a series in~$1/N$ which truncates at some finite power.  In such a state, we therefore have that
\be
\label{eq:Cdef}
S^\mathrm{ren}(\rho_B) - \left(\left\langle \Ecal_B \right\rangle - 8\pi^2 c_\mathrm{eff} \chi_B\right) = -C^{(0)} - \sum_{i=1}^n C^{(i)},
\ee
where~$\omega$, and thus~$\Ecal_B$, is constructed from the IMC flow in the classical geometry~$g^{(0)}_{ab}$, and the~$C^{(i)}$ are organized by powers of~$1/N$ with~$C^{(0)}$ being~$\Ocal(N^2)$.  To show that the bound~\eqref{eq:bound} holds for arbitrary~$\rho_B$ (satisfying properties~(i) and~(ii) above), we must show that the right-hand side of~\eqref{eq:Cdef} is non-positive.

Now, since~$C^{(0)}$ captures the limit of a classical bulk, Theorem~\ref{thm:bound} implies that~$C^{(0)} \geq 0$.  But if~$C^{(0)} > 0$, the additional subleading terms~$C^{(i > 0)}$ can't change the sign of the right-hand side of~\eqref{eq:Cdef}, and therefore it remains negative.  Thus it remains to check the marginal case~$C^{(0)} = 0$, which by Lemma~\ref{lem:saturate} and the discussion above requires that~$g^{(0)}_{ab}$ be pure AdS and that~$B$ be a disk on flat space; thus we write~$\rho_B = \rho_\mathrm{disk}$.  This condition in particular implies that~$\rho_\mathrm{disk}$ is ``close'' to the vacuum state\footnote{By the ``vacuum state''~$\rho^{(0)}_\mathrm{disk}$, we really mean the reduced density matrix on~$B$ when~$B$ is a disk on flat space and the CFT in its vacuum state; the resulting reduced density matrix~$\rho^{(0)}_\mathrm{disk}$ will in fact be thermal with a Rindler temperature.}~$\rho^{(0)}_\mathrm{disk}$ in the sense that the energy density in the state~$\rho_\mathrm{disk}$ vanishes at order~$N^2$.  Let us therefore consider the relative entropy between the state~$\rho_\mathrm{disk}$ and the vacuum state~$\rho^{(0)}_\mathrm{disk}$, which must be non-negative:
\be
\label{eq:rhovacpositive}
S\left(\rho_\mathrm{disk} \middle| \rho^{(0)}_\mathrm{disk} \right) = \Delta \langle H_\mathrm{disk} \rangle - \Delta S_\mathrm{disk} \geq 0,
\ee
where~$H_\mathrm{disk}$ is the modular Hamiltonian of~$\rho^{(0)}_\mathrm{disk}$.  But from Section~\ref{subsec:pureAdS} above, we know that~$H_\mathrm{disk} = \Ecal_\mathrm{disk}$ (since~$g^{(0)}_{ab}$ is pure AdS); thus we obtain
\begin{subequations}
\bea
\Delta \langle H_\mathrm{disk} \rangle &= \langle \Ecal_\mathrm{disk} \rangle, \\
\Delta S_\mathrm{disk} &= \Delta S^\mathrm{ren}_\mathrm{disk} = S^\mathrm{ren}(\rho_\mathrm{disk}) + 8\pi^2 c_\mathrm{eff},
\eea
\end{subequations}
where the first line follows from the fact that the energy density vanishes to all orders in~$N$ in the state~$\rho^{(0)}_\mathrm{disk}$ and the second from equation~\eqref{eq:Sdisk} for the entropy of~$\rho^{(0)}_\mathrm{disk}$ (in this perturbative analysis, we interpret~\eqref{eq:Sdisk} as a condition that fixes the renormalization scheme).  Inserting these results back into~\eqref{eq:rhovacpositive}, we obtain
\be
S^\mathrm{ren}(\rho_\mathrm{disk}) - \left(\left\langle \Ecal_\mathrm{disk} \right\rangle - 8\pi^2 c_\mathrm{eff}\right) \leq 0;
\ee
thus we conclude that when~$C^{(0)}$ vanishes, positivity of the relative entropy requires the subleading terms~$C^{(i>0)}$ to be such that the right-hand side of~\eqref{eq:Cdef} is negative.

We have thus shown that -- thanks to positivity of the relative entropy -- the bound~\eqref{eq:bound}, which was derived purely in the limit of infinite~$N$, in fact holds perturbatively in~$1/N$ corrections to arbitrary finite order!  Moreover, the weighting function~$\omega$ is still constructed from the classical geometry, so the operator~$\Ecal_B$ is straightforward to compute from the classical holographic dual.

Note that since~$\Ecal_B = \int \omega \varepsilon$, the positivity of~$\omega$ now allows us to generalize the simple bound~\eqref{eq:partialbound} away from the vacuum (and from the classical bulk limit):
\be
\label{eq:vacbound}
\langle \varepsilon \rangle \leq 0 \mbox{ on } B \quad \Longrightarrow \quad S^\mathrm{ren}_B \leq -8\pi^2 c_\mathrm{eff} \chi_B.
\ee
This statement is nontrivial, as states with energy density that is somewhere negative are very generic~\cite{FisHic16,EpsGla65}.  It also implies that in vacuum on flat space,~$S^\mathrm{ren}_B$ is bounded above by~$-8\pi^2 c_\mathrm{eff} \chi_B$ for any region~$B$, with equality only if~$B$ is a disk on Minkowski space (since~$\langle \varepsilon \rangle$ vanishes in vacuum).  But interestingly,~\cite{FauLei15} have found that in the vacuum of \textit{any} (not just holographic) CFT on Minkowski space, disks locally maximize~$S^\mathrm{ren}_B$.  Thus our bound reproduces a result known to be true in the vacuum of any CFT.  It is therefore tempting to conjecture that, at least in vacuum, our bound should hold more generally: that is, one might conjecture that~\eqref{eq:vacbound} holds in the vacuum of \textit{any} CFT on Minkowski space.  In fact, since~\eqref{eq:vacbound} holds whenever the energy density on~$B$ is non-positive, we conjecture that it holds in any state of any CFT as long as the energy density on~$B$ is non-positive.

\section{Discussion}
\label{sec:disc}

In this Paper, we have obtained a bound on the renormalized entanglement entropy of holographic~(2+1)-dimensional CFTs in terms of a weighted energy density.  One might object that because the weighting function~$\omega$ must still be computed from the bulk geometry, we have not completely fulfilled our goal of removing the bulk from the picture entirely.  One obvious remedy is to interpret our result as the statement that \textit{there exists} a positive-definite~$\omega$ such that~\eqref{eq:bound} holds.  This alone is a nontrivial result: for instance, an immediate consequence is the bound~\eqref{eq:vacbound} whenever the energy density on~$B$ is non-positive.  But a better interpretation is that all bulk dependence in~\eqref{eq:bound} is packaged into~$\omega$, leaving an elegant relationship between two important field theory quantities: entanglement entropy and local energy density.

Indeed, our bound has a number of appealing aspects, most notably the facts that (i) it takes the form of a weighted integral over the local energy density of the CFT, which is a well-understood local operator; (ii) it holds perturbatively in~$1/N$ corrections to arbitrary finite order, even though the bound operator~$\Ecal_B$ is calculated only from the infinite-$N$ limit; and (iii) it is consistent with known results regarding the entanglement entropy of the vacuum state of more general non-holographic CFTs on Minkowski space.  In fact, a special case of our bound reproduces the result of~\cite{AleMaz08}, which is a global generalization of the local statements of~\cite{AllMez14,Mez14,FauLei15}: disks maximize the renormalized entanglement entropy of CFTs in the vacuum state.  Because of its consistency with these known CFT results, we have conjectured (subject to certain constraints) that in any state of a CFT, the renormalized entanglement entropy of any region~$B$ with non-positive energy density is bounded above by~$-8\pi^2 c_\mathrm{eff} \chi_B$ as long as~$B$ lies on a moment of time symmetry (here~$c_\mathrm{eff}$ is interpreted as a measure of the degrees of freedom in the CFT, computed as usual from the coefficient of the two-point function of the stress tensor).  Any scheme-dependence in the definition of the renormalized entropy can be removed by requiring that for disk-shaped regions, the entanglement entropy equal precisely~$-8\pi^2 c_\mathrm{eff}$.

Nevertheless, our derivation via the inverse mean curvature flow required making some simplifying assumptions.  Let us therefore list the limitations of~\eqref{eq:bound} and possible ways to address them, as well as interesting open questions that we leave to future work.

\textbf{Weak IMC Flows.} The bound in Theorem~\ref{thm:bound} follows from a monotonicity property of the reduced Hawking mass along a smooth IMC flow.  But as discussed in Section~\ref{sec:IMCF}, IMC flows need not always be smooth, and in general one must instead consider weak IMC flows.  For flows of surfaces without boundary, the results of~\cite{HuiIlm01,LeeNev15} show that it is possible to define a Hawking mass functional which is monotonic even on these weak flows.  Combined with the fact that weak IMC flows of compact surfaces with boundary have been shown to exist~\cite{Mar15}, we think it quite likely that there exists a way of preserving the monotonicity property even along weak boundary-anchored flows.  However, this remains to be proven.  Relatedly, a detailed analysis of such weak flows might provide more insights into the behavior of the weighting function~$\omega$, which would in turn put stronger constraints on the behavior of the operator~$\Ecal_B$.

\textbf{Higher Dimensions.}  Our bound was derived only for a~$(2+1)$-dimensional CFT, and therefore a natural question is whether it may be extended to higher dimensions.  For historical reasons, the original considerations of the IMC flow focused exclusively on~$(3+1)$-dimensional spacetimes.  However, the IMC flow has since been generalized to arbitrary dimensions; it therefore remains to find a Hawking mass-like functional that remains monotonic even in higher dimensions; whose value on boundary-anchored extremal surfaces recovers their renormalized area; and whose asymptotic behavior can be expressed in terms of the CFT stress tensor.  One way to potentially construct such an operator is to generalize the special properties that~$I_H$ has in~$(3+1)$ bulk dimensions.  For instance, at least in even bulk dimensions, one could ensure that~$I_H$ contains an integral over an Euler density, and would need to engineer other geometric contributions to~$I_H$ to precisely cancel the counterterms that appear in the boundary stress tensor~\eqref{eq:BalaKraus} in higher dimensions.  At least superficially, it appears these requirements are consistent with one another: in~$2(n+1)$ bulk dimensions both the Euler density and the counterterms that appear in the boundary stress tensor contain~$n$ powers of the intrinsic curvature.  It is less clear, however, how to proceed in odd bulk dimensions; we leave this direction to future work.

\textbf{Deforming the CFT.} Perhaps the strongest assumption necessary for~\eqref{eq:bound} to hold is that the bulk stress tensor decay like~$o(\Omega^3)$, which is necessary to ensure that the asymptotic value of the Hawking mass be finite.  In field theoretic terms, this falloff requirement implies that except for the stress tensor, all CFT operators with conformal dimension~$\Delta > 3/2$ have a vanishing source (and that operators with~$\Delta < 3/2$ have a vanishing one-point function).  However, we expect that it should be possible to generalize the definition of the reduced Hawking mass to allow for more nontrivial deformations of the CFT: if the bulk stress tensor falls of more slowly than~$o(\Omega^3)$, the boundary stress tensor~$\T_{ab}$ receives additional contributions from matter counterterms.  Then the reduced Hawking mass can be modified by matter terms chosen to precisely cancel these additional contributions to~$\T_{ab}$.  However, it is unclear whether or not such a modification can be performed while preserving the crucial monotonicity property.  If the bound can be generalized in this way, however, then by deforming the CFT one could explore RG flow of renormalized entanglement entropy.  Indeed, in such a case, it would potentially be possible to give the IMC flow itself an interpretation as a geometrization of an RG flow.

\textbf{Connections to Complexity.} Recently, it has been conjectured that the complexity of a holographic CFT state is dual to either the volume of a maximal-time slice of the bulk or the spacetime volume of the Wheeler-deWitt patch of the bulk~\cite{Sus14,Sus14b,StaSus14,BroRob15,BroRob15b,BroSus16,BenCar16}.  While we note that the latter proposal currently seems to be favored in the literature, it is worth understanding in more detail the prominent role of the maximal-volume time slice~$N$ used in constructing our bound.  To that end, note that the bound~\eqref{eq:bound} essentially comes about from the equality
\be
\label{eq:complexbound}
\lim_{\tau \to \infty} e^{\tau/2} I_H[\Sigma_\tau] = I_H[\Sigma_0] + \int_0^\infty e^{\tau/2} \left(\dot{I}_H[\Sigma_\tau] + \frac{1}{2} I_H[\Sigma_\tau]\right) \, d\tau,
\ee
followed by the observation that~$\dot{I}_H + I_H/2 \geq 0$.  But since~$\dot{I}_H + I_H/2$ can be written as an integral over slices of the time slice~$N$, the integral in~\eqref{eq:complexbound} can be re-expressed as an integral over the \textit{entire} time slice~$N$.  One should therefore be able to bound that term by an expression proportional to the volume of~$N$ to obtain a refined bound between~$S^\mathrm{ren}_B$ and~$\Ecal_B$.  If, as per the early holographic complexity conjectures, we interpret the volume of~$N$ as the complexity of the CFT state on~$B$, then this refined bound would also have an interpretation in field theoretic terms, and would nicely tie together entanglement entropy, local energy, and complexity.

\textbf{Quantum Energy Inequalities.} Here, we have interpreted the bound~\eqref{eq:bound} as a bound on~$S^\mathrm{ren}_B$ in terms of the weighted energy density~$\Ecal_B$.  But one could equally well interpret~\eqref{eq:bound} as a bound on the expectation value of~$\Ecal_B$ in terms of~$S^\mathrm{ren}_B$.  In this sense, our bound can be interpreted as a type of quantum energy inequality (QEI); that is, a lower bound on the weighted energy density of a QFT (see e.g.~\cite{Few12} for a review).  However, note that in our case, the weighting function~$\omega$ is fixed by the classical bulk geometry, and therefore by the CFT state.  In QEIs, on the other hand, one typically desires the freedom to choose the weighting function arbitrarily.  Nevertheless, deriving QEIs even for free fields is quite nontrivial, and therefore our bound can be thought of as a first step towards deriving QEIs in holographic CFTs.
 
\textbf{A Field Theoretic Derivation.} While our bound was derived only for states of holographic CFTs with Einstein gravity duals, in Section~\ref{sec:perturbative} we showed that the bound in fact continues to hold under perturbative quantum corrections.  It would therefore be interesting to investigate whether a bound of the form~\eqref{eq:bound} can be given a purely field theoretic derivation, without the need for holography at all.  That is, can it be shown that for any region~$B$ of any CFT, the renormalized entropy density is bounded from below by an appropriately weighted local energy density of~$B$?  When the energy density everywhere in~$B$ has the same sign as~$S^\mathrm{ren}_B$, the answer is trivially ``yes'': one could simply take~$\omega = S^\mathrm{ren}_B/(\langle \varepsilon \rangle \mathrm{Vol}(B))$ so that trivially~$\langle \Ecal_B \rangle = S^\mathrm{ren}_B$.  The question becomes less trivial when~$S^\mathrm{ren}_B$ and~$\langle \varepsilon \rangle$ have different signs somewhere in~$B$.

\section*{Acknowledgements}

It is a pleasure to thank Francesco Aprile, Oscar Dias, Netta Engelhardt, Piermarco Fonda, Simon Gentle, Don Marolf, Nick Poovuttikul, and Kostas Skenderis for discussions and correspondence.  We also thank the two anonyous referees for their useful comments on an earlier draft of this manuscript.  SF was supported by the ERC Advanced grant No.~290456 and STFC grant ST/L00044X/1.

\appendix

\section{Conformal Transformation of Extrinsic Curvatures}
\label{app:Kconformal}

In this Appendix we useful results on the conformal transformation properties of extrinsic and geodesic curvatures.  Let us therefore consider a hypersurface~$N$ of dimension~$n$ embedded in a spacetime~$(M,g_{ab})$ (here we allow~$N$ and~$M$ to have arbitrary dimension and signature, as long as neither is degenerate).  Define the extrinsic curvature of~$N$ to be
\be
K^a_{bc} = -{h_b}^d {h_c}^e \grad_d {h_e}^a,
\ee
where~$h_{ab}$ is the induced metric on~$N$ and~$\grad_a$ is the covariant derivative operator compatible with~$g_{ab}$.  We also remind the reader that~$K^a_{bc}$ only has components orthogonal to~$N$ in its upper index and tangent to~$N$ in its lower two indices.

Under a conformal rescaling~$\tilde{g}_{ab} = \Omega^2 g_{ab}$, the induced metric on~$N$ transforms as~$\tilde{h}_{ab} = \Omega^2 h_{ab}$, and therefore its extrinsic curvature is
\be
\widetilde{K}^a_{bc} = -\tilde{h}_b^{\phantom{b}d} \tilde{h}_c^{\phantom{c}e} \widetilde{\grad}_d \tilde{h}_e^{\phantom{e}a} = -{h_b}^d {h_c}^e \grad_d {h_e}^a - {h_b}^d {h_c}^e \left(C^a_{df} {h_e}^f - C^f_{de} {h_f}^a \right),
\ee
where~$\widetilde{\grad}_a$ is the covariant derivative operator compatible with~$\tilde{g}_{ab}$, the tensor~$C^a_{bc}$ is the difference~$\widetilde{\grad}_a - \grad_a$ between these two connections, and we recognized that~$\tilde{h}_a^{\phantom{a}b} = {h_a}^b$.  In terms of~$\Omega$,~$C^a_{bc}$ is given by (see e.g.~Appendix D of~\cite{Wald})
\be
C^a_{bc} = 2{\delta^a}_{(b} \grad_{c)} \ln \Omega - g_{bc} \grad^a \ln \Omega,
\ee
so that we find that the extrinsic curvature transforms as
\be
\label{eq:Kabcconformal}
\widetilde{K}^a_{bc} = K^a_{bc} + h_{bc} \left(g^{ad} - h^{ad}\right) \grad_d \ln \Omega.
\ee
It is then straightforward to see that the trace-free extrinsic curvature transforms homogeneously under conformal transformations:
\be
\widetilde{K}^a_{bc} - \frac{1}{n} \widetilde{K}^a \tilde{h}_{bc} = K^a_{bc} - \frac{1}{n} K^a h_{bc},
\ee
where~$K^a \equiv h^{bc} K^a_{bc}$ and likewise for~$\widetilde{K}^a$.  

Consider now two special cases.  First, if~$N$ is codimension-one with unit normal~$n^a$, the upper index of~$K^a_{bc}$ only has a component in the single normal direction~$n^a$.  It is thus customary to take the extrinsic curvature to be~$K_{ab} = n_c K^c_{ab}$\footnote{Note that this definition is equivalent to the expression~$K_{ab} = {h^c}_a {h^d}_b \grad_c n_d$ more commonly seen in the literature.  By this convention, the mean curvature of the sphere in~$\mathbb{R}^3$ is positive for outward-pointing~$n^a$.}, in which case we find (using~$\tilde{n}_a = \Omega \, n_a$) that the corresponding trace-free extrinsic curvature transforms as
\be
\label{eq:Kconformal}
\widetilde{K}_{ab} - \frac{1}{n} \widetilde{K} \tilde{h}_{ab} = \Omega \left(K_{ab} - \frac{1}{n} K h_{ab}\right).
\ee

Next, let~$N$ be a (one-dimensional) curve of signature~$\eps$ with unit tangent~$u^a$ (that is,~$\eps = u^2 = \pm 1$).  Then the lower two indices of~$K^a_{bc}$ have components only in the direction~$u^a$, and the trace~$K^a = K^a_{bc} u^b u^c$ is just the negative geodesic acceleration of~$N$:~$a^a = -K^a$.  From~\eqref{eq:Kabcconformal} (and using~$\tilde{u}^a = \Omega^{-1} \, u^a$ and~$h_{ab} = \eps u_a u_b$) we find~$a^a$ transforms as
\be
\tilde{a}^a = \frac{1}{\Omega^2}\left[a^a - \left(\eps g^{ab} - u^a u^b\right) \grad_b \ln \Omega\right].
\ee
If~$M$ is two-dimensional, let~$n^a$ be the unit normal vector to~$N$ (which for nonzero~$a^a$ is given by~$n^a = a^a/|a|$); then the geodesic curvature of~$N$ is~$k_g = n_a a^a$, which thus transforms like
\be
\label{eq:kgconformal}
\tilde{k}_g = \frac{1}{\Omega} \left(k_g - \eps n^a \grad_a \ln\Omega\right).
\ee

Finally, we may use these conformal transformation properties to derive useful properties of extremal surfaces in asymptotically locally AdS spacetimes.  The first is the well-known property that extremal surfaces intersect the boundary orthogonally; the second constrains the asymptotic falloff of the extrinsic curvature; the third reproduces for convenience a result of~\cite{AleMaz08}.

\begin{lem}
\label{lem:extremalorthog}
Any non-null extremal surface which intersects the conformal boundary of an asymptotically locally AdS spacetime must be orthogonal to it.
\end{lem}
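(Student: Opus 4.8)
The plan is to read off orthogonality directly from the conformal transformation law~\eqref{eq:Kabcconformal} of the extrinsic curvature, which is precisely the tool this Appendix has assembled. Fix a conformal completion~$\tilde{g}_{ab} = \Omega^2 g_{ab}$ with~$\Omega$ a defining function regular in a neighborhood of the locus where~$\Sigma$ meets~$\partial M$, so that~$\Omega = 0$ and~$\grad_a \Omega \neq 0$ there; write~$h_{ab}$ for the metric~$\Sigma$ inherits from~$g_{ab}$ and~$\widetilde{N}^{ab} \equiv \tilde{g}^{ab} - \tilde{h}^{ab}$ for the (orthogonal, with respect to~$\tilde{g}_{ab}$) projector onto the normal bundle of~$\Sigma$. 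Non-nullness of~$\Sigma$ is exactly what makes~$\widetilde{N}^{ab}$ well defined and the splitting of a vector into tangential and normal parts an orthogonal one.

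The first step is to compute the mean curvature vector of~$\Sigma$ in the conformal frame. Tracing~\eqref{eq:Kabcconformal} over its two lower indices with~$\tilde{h}^{bc} = \Omega^{-2} h^{bc}$, using the extremality condition~$K^a \equiv h^{bc} K^a_{bc} = 0$ together with~$g^{ad} - h^{ad} = \Omega^2 \widetilde{N}^{ad}$ and~$\grad_a \ln\Omega = \Omega^{-1}\grad_a \Omega$, I expect to find
\be
\widetilde{K}^a = \frac{n}{\Omega}\, \widetilde{N}^{ab}\, \gradt_b \Omega,
\ee
with~$n = \dim\Sigma$. This is the surface analogue of the familiar statement that a conformal rescaling shifts the mean curvature by the normal gradient of the conformal factor.

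The second step supplies the finiteness input. Because~$\Sigma$ reaches the conformal boundary as a boundary-anchored surface in the sense of Definition~\ref{def:bndryanchored} — it is~$C^1$ (indeed, by the Fefferman--Graham expansion of extremal surfaces, smooth) with respect to~$\tilde{g}_{ab}$ up to~$\partial\Sigma$ — its extrinsic curvature~$\widetilde{K}^a$ stays bounded as~$\Omega \to 0$. The displayed identity then forces~$\widetilde{N}^{ab}\gradt_b\Omega = (\Omega/n)\,\widetilde{K}^a \to 0$ on~$\partial\Sigma$. But~$\gradt_a\Omega$ annihilates~$T\partial M$, hence~$T\partial\Sigma$, so~$\gradt^a\Omega$ is orthogonal to the tangent directions of~$\partial\Sigma$ sitting inside~$\Sigma$; the vanishing of its normal-to-$\Sigma$ component then says it is tangent to~$\Sigma$ along~$\partial\Sigma$. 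Since~$\gradt_a\Omega \neq 0$ there, this pins down the one remaining (``radial'') tangent direction of~$\Sigma$ to be the normal of~$\partial M$, which is precisely the statement that~$\Sigma$ meets~$\partial M$ orthogonally.

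The main obstacle is not the algebra but the regularity: the argument only bites once one knows~$\widetilde{K}^a$ is finite at~$\partial\Sigma$, i.e.\ that the extremal surface approaches the conformal boundary in a~$C^1$ (or better) fashion. This is exactly the condition built into Definition~\ref{def:bndryanchored}, and for genuinely extremal surfaces it can be justified from their known near-boundary expansion, so I would either assume it outright or invoke that expansion. A minor but necessary bit of care is the bookkeeping of~$\Omega$-weights when passing between~$g_{ab}$ and~$\tilde{g}_{ab}$ for raising indices and taking traces, so that the powers of~$\Omega$ combine to leave the single factor~$1/\Omega$ in the identity above.
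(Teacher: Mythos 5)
Your proposal is correct and follows essentially the same route as the paper: trace the conformal transformation law~\eqref{eq:Kabcconformal} using extremality to obtain~$\widetilde{K}^a = (n/\Omega)(\tilde{g}^{ab}-\tilde{h}^{ab})\gradt_b\Omega$, invoke finiteness of~$\widetilde{K}^a$ at the boundary (the paper justifies this via the extremal surface obeying local equations that are regular in the compactified frame, rather than via Definition~\ref{def:bndryanchored}), and conclude that the boundary normal has no component orthogonal to the surface. The argument and its interpretation of orthogonality match the paper's proof.
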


\begin{proof}
Let~$N$ be a non-null extremal surface that intersects the conformal boundary~$\partial M$, and call the induced metric and extrinsic curvature of~$N$~$h_{ab}$ and~$K^a_{bc}$, respectively.  Consider a defining function~$\Omega$ that conformally compactifies the geometry, so that the rescaled metric~$\tilde{g}_{ab} = \Omega^2 g_{ab}$ is regular at the conformal boundary~$\partial M$.  By~\eqref{eq:Kabcconformal}, we find that
\be
\widetilde{K}^a \equiv \tilde{h}^{bc} \widetilde{K}^a_{bc} = \frac{n}{\Omega} \left(\tilde{g}^{ab} - \tilde{h}^{ab}\right)\gradt_b \Omega,
\ee
where~$n = \mathrm{dim}(N)$ and we exploited the extremality condition~$K^a \equiv h^{bc} K^a_{bc} = 0$.  Now, since~$N$ must be differentiable at~$\partial M$\footnote{This follows from the fact that since it is extremal,~$N$ obeys local equations determined by~$\tilde{g}_{ab}$ and~$\Omega$, which are both differentiable at the boundary.},~$\widetilde{K}^a$ must be regular there; since~$\Omega|_{\partial M} = 0$, this implies that
\be
\left(\tilde{g}^{ab} - \tilde{h}^{ab}\right)\gradt_b \Omega|_{\partial M} = 0.
\ee
But~$\gradt_b \Omega|_{\partial M} \propto \tilde{n}_b$, where~$\tilde{n}^a$ is the normal to the AdS boundary.  Moreover, note that the tensor~$(\tilde{g}^{ab} - \tilde{h}^{ab})$ is just the projector onto the space orthogonal to~$N$.  Thus we find that at the boundary,~$\tilde{n}^a$ has no orthogonal component to~$N$, implying that~$N$ is normal to the boundary.
\end{proof}

\begin{lem}
\label{lem:Kfalloff}
Let~$N$ be a non-null extremal surface which intersects the conformal boundary of an asymptotically locally AdS spacetime.  If~$\partial N \subset \partial M$ is a conformally totally geodesic surface (that is, if there exists a conformal frame in which the projection of~$^{\partial N} \! \widetilde{K}^a_{bc}$ to~$\partial M$ vanishes), then the components of the extrinsic curvature~$K^a_{bc}$ of~$N$ in any orthonormal frame have asymptotic falloff
\be
K^{\hat{\mu}}_{\hat{\nu}\hat{\rho}} = \Ocal(\Omega^2)
\ee
for any defining function~$\Omega$.
\end{lem}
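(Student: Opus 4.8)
The quantity $K^{\hat\mu}_{\hat\nu\hat\rho}$ is built from the fixed tensor $K^a_{bc}$ on $M$, so the assertion ``$K^{\hat\mu}_{\hat\nu\hat\rho} = \Ocal(\Omega^2)$ for any defining function'' is really a statement about a rate of decay near $\partial M$, and any two defining functions differ near $\partial M$ by a factor bounded above and below. The plan is therefore to prove it for one particularly convenient choice of $\Omega$. Since $\partial N$ is conformally totally geodesic, I would fix a representative $\gamma^{(0)}$ of the conformal boundary metric in which the second fundamental form of $\partial N$ in $(\partial M,\gamma^{(0)})$ vanishes, take $Z$ to be the associated Fefferman--Graham defining function, and set $\tilde g_{ab} = (Z/\ell)^2 g_{ab}$, so that near $\partial M$ one has $\tilde g = dZ^2 + \gamma_{ij}(Z,x)\,dx^i dx^j$ with $\gamma_{ij} = \gamma^{(0)}_{ij} + Z^2\gamma^{(2)}_{ij} + \cdots$; in particular $\partial_Z\gamma_{ij}|_{Z=0}=0$. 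I would then work in coordinates adapted to $\partial N$ near a boundary point: the radial coordinate $Z$, coordinates $y^\alpha$ along $\partial N$, and Gaussian-normal coordinates $x^I$ transverse to $\partial N$ inside $\partial M$, so that $\partial N = \{Z=0,\ x^I=0\}$.

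The heart of the argument is to show that in this frame $N$ is totally geodesic along $\partial M$, i.e.\ $\widetilde K^a_{bc}|_{\partial M}=0$. By Lemma~\ref{lem:extremalorthog}, $N$ meets $\partial M$ orthogonally, so near $\partial N$ it is a graph $x^I = \phi^I(Z,y)$ with $\phi^I|_{Z=0}=0$ and $\partial_Z\phi^I|_{Z=0}=0$, whence $\phi^I = \Ocal(Z^2)$. Solving the minimal-surface equation for $N$ to leading order in $Z$ (the standard near-boundary analysis of extremal surfaces in asymptotically AdS spaces, cf.~\cite{GraWit99}) identifies the coefficient of the $Z^2$ term in $\phi^I$ with a multiple of the mean-curvature vector of $\partial N$ in $(\partial M,\gamma^{(0)})$, which vanishes by our choice of frame; the subleading behaviour is fast enough that $\partial_Z^2\phi^I\to 0$ at $\partial M$. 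Feeding this into the Gauss formula for $N$ in $\tilde g$, the purely $\partial N$-tangential components $\widetilde K^I_{\alpha\beta}|_{\partial M}$ reduce to the (vanishing) second fundamental form of $\partial N$ in $\partial M$; the component $\widetilde K^I_{ZZ}|_{\partial M}$ is controlled by $\partial_Z^2\phi^I|_{Z=0}=0$; and the mixed component $\widetilde K^I_{Z\alpha}|_{\partial M}$ vanishes because $\partial_Z\gamma_{ij}|_{Z=0}=0$. Hence $\widetilde K^a_{bc}$, which is smooth since $\tilde g$ is regular and $N$ is extremal, vanishes on $\partial M$ and is therefore $\Ocal(Z)$ nearby in a $\tilde g$-orthonormal frame.

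The last step transfers this to $g$ via~\eqref{eq:Kabcconformal}, $K^a_{bc} = \widetilde K^a_{bc} - h_{bc}\,(g^{ad}-h^{ad})\grad_d\ln\Omega$ with $\Omega=Z/\ell$. Passing from a $\tilde g$-orthonormal frame to a $g$-orthonormal one rescales a $(1,2)$ tensor by one factor of $\Omega$, so the first term has $g$-orthonormal components $\Omega\cdot\Ocal(Z)=\Ocal(\Omega^2)$. For the second term, $\grad^a\ln\Omega = (Z/\ell^2)\,\partial_Z^a$ has $g$-norm $1/\ell$, and since $\partial_Z = E_Z - (\partial_Z\phi^I)\,\partial_I$ with $E_Z$ tangent to $N$, the component of $\grad\ln\Omega$ orthogonal to $N$ has $g$-norm at most $(Z/\ell^2)\sum_I |\partial_Z\phi^I|\,|\partial_I|_g = (Z/\ell^2)\cdot o(Z)\cdot\Ocal(Z^{-1}) = o(Z)$, in fact $\Ocal(Z^2)$ once the subleading falloff of $\phi^I$ is used; multiplying by $h_{bc}$ (whose $g$-orthonormal components are bounded) again gives $\Ocal(\Omega^2)$ in a $g$-orthonormal frame. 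Therefore $K^{\hat\mu}_{\hat\nu\hat\rho}=\Ocal(\Omega^2)$ for this $\Omega$, and hence for any defining function.

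I expect the main obstacle to be the near-boundary expansion of $N$ used in the second paragraph: establishing that the embedding functions $\phi^I$ contain no term linear or quadratic in $Z$ (so that $\phi^I = o(Z^2)$ with a rate sharp enough for the estimates) requires the leading-order minimal-surface equation, and it is precisely here -- together with the identification of $\widetilde K^I_{\alpha\beta}|_{\partial M}$ with the second fundamental form of $\partial N$ -- that the conformally-totally-geodesic hypothesis on $\partial N$ enters, through both the vanishing of the mean-curvature vector and the vanishing of the full second fundamental form. A secondary subtlety is the possible appearance of logarithmic terms in the Fefferman--Graham-type expansion of $\phi^I$; in the dimensions relevant to this paper they enter at high enough order that the stated $\Ocal(\Omega^2)$ falloff is not affected.
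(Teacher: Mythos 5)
Your argument is correct and reaches the stated falloff, but it takes a genuinely different route from the paper's. The paper never solves the extremal-surface equation near the boundary: it contracts the conformal transformation law~\eqref{eq:Kabcconformal} with a normal~$\tilde{n}^a$ and tangents~$\tilde{t}^a$ to~$\partial N$, notes that~$\tilde{n}_a\tilde{t}^b\tilde{t}^c\widetilde{K}^a_{bc}$ vanishes on~$\partial N$ by the totally geodesic hypothesis and hence is~$\Ocal(\Omega)$ by regularity of the compactified data, and then uses only the \emph{trace} of the same identity together with extremality,~$K^a = 0$, to conclude~$n^a\grad_a\ln\Omega = \Ocal(\Omega^2)$; feeding this back gives~$n_a K^a_{bc} = \Ocal(\Omega^2)$, which is the full statement since the upper index of~$K^a_{bc}$ is purely normal. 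You instead write~$N$ as a graph~$x^I = \phi^I(Z,y)$ in Fefferman--Graham gauge and invoke the Graham--Witten near-boundary expansion to kill the~$Z^2$ coefficient of~$\phi^I$ (via the vanishing mean curvature of~$\partial N$), then check~$\widetilde{K}^a_{bc}|_{\partial M} = 0$ component by component before transferring via~\eqref{eq:Kabcconformal}. Your bound~$\partial_Z\phi^I = \Ocal(Z^2)$ is exactly the paper's~$n^a\grad_a\ln\Omega = \Ocal(\Omega^2)$, obtained from the leading-order equation of motion rather than algebraically from the trace. What your route buys is explicitness: it isolates where each part of the hypothesis enters (zero mean curvature for the~$Z^2$ coefficient of the embedding, the full second fundamental form for the tangential components) and makes the roles of Lemma~\ref{lem:extremalorthog} and the FG gauge transparent. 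What the paper's route buys is economy and uniformity: no adapted coordinates, no near-boundary solution of the surface equation, arbitrary codimension and signature handled at once, and no direct exposure to possible logarithmic terms in the embedding expansion that you rightly flag --- though the paper's own step ``$\widetilde{K}^a$ is regular at~$\partial M$, hence the relevant components are~$\Ocal(\Omega)$'' implicitly assumes the same degree of boundary smoothness, so on that point you are not less rigorous than the paper.
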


\begin{proof}
If~$\mathrm{dim}(N) = 1$, then~$N$ is a geodesic and its extrinsic curvature vanishes identically; hence the falloff is obeyed trivially.  Let us therefore consider~$\mathrm{dim}(N) \geq 2$.

Consider a defining function~$\Omega$ that conformally compactifies the geometry and with respect to which~$\partial N$ is a totally geodesic surface; that is,~$\tilde{g}_{ab} = \Omega^2 g_{ab}$ is regular at~$\partial M$ and~$\partial N$ is a totally geodesic surface in~$\partial M$.  Next, on~$\partial N$ consider an arbitrary unit\footnote{Normalized vector fields dressed with tildes are normalized with respect to~$\tilde{g}_{ab}$.} vector field~$\tilde{n}^a$ normal to~$\partial N$ and tangent to~$\partial M$, and an additional arbitrary unit vector field~$\tilde{t}^a$ tangent to~$\partial N$.  By Lemma~\ref{lem:extremalorthog},~$N$ is orthogonal to~$\partial M$, and therefore~$\tilde{n}^a$ is also normal to~$N$.  We may thus smoothly extended~$\tilde{n}^a$,~$\tilde{t}^a$ to unit vector fields on~$N$ in a neighborhood of~$\partial M$ by requiring that they remain orthogonal and tangent to~$N$, respectively.  (This extension is highly non-unique; it can be performed, for instance, by parallel transport along an arbitrary family of curves on~$N$.)  Thus they give rise to unit vector fields~$n^a = \Omega \tilde{n}^a$,~$t^a = \Omega \tilde{t}^a$ with respect to~$g_{ab}$ on~$N$ in a neighborhood of~$\partial M$.

Next, consider the components~$\tilde{n}_a \tilde{t}^b \tilde{t}^c \widetilde{K}^a_{bc}$ of the extrinsic curvature of~$N$ with respect to~$\tilde{g}_{ab}$; these are related to the extrinsic curvature of~$N$ with respect to~$g_{ab}$ by~\eqref{eq:Kabcconformal}:
\be
\label{eq:Kcomponenttransform}
\tilde{n}_a \tilde{t}^b \tilde{t}^c \widetilde{K}^a_{bc} = \frac{1}{\Omega}\left(n_a t^b t^c K^a_{bc} + \eps n^a \grad_a \ln \Omega\right),
\ee
where~$\eps \equiv t^2 = \pm 1$, depending on the signature of~$t^a$.  Now, on~$\partial N$ we have that
\be
\tilde{n}_a \tilde{t}^b \tilde{t}^c \widetilde{K}^a_{bc}|_{\partial N} = \tilde{n}_a \tilde{t}^b \tilde{t}^c \, ^{\partial N} \! \widetilde{K}^a_{bc} = 0,
\ee
where the second equality is a consequence of the fact that~$\partial N$ is a totally geodesic surface with respect to~$\tilde{g}_{ab}$.  This implies that near the boundary, we must have~$\tilde{n}_a \tilde{t}^b \tilde{t}^c \widetilde{K}^a_{bc} = \Ocal(\Omega)$.  Comparing with~\eqref{eq:Kcomponenttransform}, this in turn implies that
\be
n_a t^b t^c K^a_{bc} + \eps n^a \grad_a \ln \Omega = \Ocal(\Omega^2).
\ee
But since~$t^a$ is arbitrary, this implies that
\be
n_a K^a_{bc} = -h_{bc} \, n^a \grad_a \ln\Omega + \Ocal(\Omega^2),
\ee
where the~$\Ocal(\Omega^2)$ stands for a tensor whose orthonormal components are~$\Ocal(\Omega^2)$.  Finally, because~$N$ is extremal, we have that
\be
0 = n_a K^a = -n \, n^a \grad_a \ln \Omega + \Ocal(\Omega^2),
\ee
where as before,~$n = \mathrm{dim}(N)$.  Therefore 
\be
n^a \grad_a \ln \Omega = \Ocal(\Omega^2),
\ee
and so we conclude that in fact
\be
n_a K^a_{bc} = \Ocal(\Omega^2)
\ee
for this choice of defining function~$\Omega$.  But since~$n^a$ is an arbitrary unit vector normal to~$N$ and since any two defining functions are related by an~$\Ocal(1)$ factor, this completes the proof.
\end{proof}

\begin{lem}
\label{lem:geocurvature}
Let~$\Sigma$ be a two-dimensional spacelike boundary-anchored extremal surface in an asymptotically locally AdS spacetime~$(M,g_{ab})$ with AdS scale~$\ell$.  As described in Section~\ref{subsec:renarea}, let~$\Sigma^\eps$ be the corresponding regulated surface cut off at~$Z = \eps/\ell$, where~$Z$ is a Fefferman-Graham defining function.  Then
\be
\label{eq:kgexpansion}
\int_{\partial \Sigma^\eps} k_g = -\frac{L}{\eps} + \Ocal(\eps),
\ee
where~$k_g$ is the geodesic curvature of~$\partial \Sigma^\eps$ in~$\Sigma$ with respect to the outward-pointing normal and~$L$ is the length of~$\partial \Sigma$ in the conformal frame defined by~$Z$ (and the natural volume element on~$\partial \Sigma^\eps$ is understood).
\end{lem}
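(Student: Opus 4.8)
The plan is to reduce the claim to a short computation in Fefferman--Graham coordinates adapted to~$\partial\Sigma$, with the extremality of~$\Sigma$ entering at exactly one place. By Lemma~\ref{lem:extremalorthog},~$\Sigma$ meets~$\partial M$ orthogonally, so near~$\partial\Sigma$ I would introduce on~$\Sigma$ the Fefferman--Graham coordinate~$z = \ell Z$ (so that~$\partial\Sigma^\eps$ is the slice~$z=\eps$) and a coordinate~$y$ equal to proper length along~$\partial\Sigma$ in the conformal frame associated to~$Z$. The whole content of the lemma is that in these coordinates the induced metric takes the form
\be
ds^2_\Sigma = \frac{\ell^2}{z^2}\left[\left(1+\Ocal(z^2)\right)dz^2 + \Ocal(z^3)\,dz\,dy + \left(1+\Ocal(z^2)\right)dy^2\right],
\ee
with \emph{no}~$\Ocal(z)$ correction to the~$dy^2$ coefficient; given this, the rest is routine. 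The~$dz^2$ coefficient is automatically~$(\ell^2/z^2)(1+\Ocal(z^2))$ because~$Z$ is a Fefferman--Graham defining function and~$\Sigma\perp\partial M$, and the leading~$dy^2$ coefficient is~$1$ by the choice of~$y$.

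The only step that uses extremality is the vanishing of the~$\Ocal(z)$ piece of the~$dy^2$ coefficient. Writing~$\Sigma$ as a graph~$x = X(z,y)$ (and, away from a static configuration,~$t = T(z,y)$) over the totally geodesic surface obtained by firing geodesics normal to~$\partial M$ off~$\partial\Sigma$, with~$x$ a Gaussian-normal coordinate off~$\partial\Sigma$ as in the proof of Theorem~\ref{thm:monotonicity}, the minimal-surface equation near the boundary reduces at leading order to a linear ODE in~$z$ whose only solution regular at~$z=0$ with~$X|_{z=0}=0$ is~$X = -\tfrac12 a(y)\,z^2 + \Ocal(z^3)$ ($a(y)$ the geodesic curvature of~$\partial\Sigma$), and similarly for~$T$; in particular the graphing functions have no~$\Ocal(z)$ term. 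Since the only piece of the ambient metric near~$\partial\Sigma$ that could feed an~$\Ocal(z)$ correction into~$g_{yy}$ is the~$-2x\,a(y)\,dy^2$ term, and the graph is~$\Ocal(z^2)$, the pulled-back metric has the displayed form. (Equivalently one may invoke the Fefferman--Graham expansion for minimal submanifolds of~\cite{GraWit99,AleMaz08}.) The residual~$\Ocal(z^3)$ cross term is killed by the redefinition~$y\to y+\Ocal(z^4)$, which leaves the cutoff slice~$z=\eps$ unchanged.

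With~$g_{yy} = (\ell^2/z^2)(1+\Ocal(z^2))$ the computation closes. The geodesic curvature of the level set~$z=\mathrm{const}$, taken with the outward-pointing unit normal (toward decreasing~$z$) and the sign convention of footnote~\ref{foot:kg}, is
\be
k_g = \frac{1}{2\sqrt{g_{zz}}}\,\partial_z\ln g_{yy} = -\frac{1}{\ell} + \Ocal(z^2),
\ee
while the induced length element along~$z=\mathrm{const}$ is~$ds = \sqrt{g_{yy}}\,dy = (\ell/z)(1+\Ocal(z^2))\,dy$. Multiplying and integrating over~$y\in[0,L)$ at~$z=\eps$ gives
\be
\int_{\partial\Sigma^\eps} k_g = \int_0^L\left(-\frac{1}{z}+\Ocal(z)\right)dy\,\bigg|_{z=\eps} = -\frac{L}{\eps}+\Ocal(\eps),
\ee
which is~\eqref{eq:kgexpansion}; as a check, for the extremal surface anchored on a round circle in Poincar\'e AdS$_4$ -- a totally geodesic hemisphere -- this holds with~$\Ocal(\eps)$ replaced by~$0$.

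The main obstacle is the middle paragraph: without extremality the~$\Ocal(z)$ coefficient in~$g_{yy}$ is generically nonzero, and being integrated against the divergent measure~$(\ell/z)\,dy$ it would produce a spurious~$\Ocal(z^0)$ term in~$\int_{\partial\Sigma^\eps}k_g$, destroying the clean~$-L/\eps+\Ocal(\eps)$ structure. A packaging via the conformal transformation law~\eqref{eq:kgconformal} for~$k_g$ under~$\tilde g_{ab} = Z^2 g_{ab}$ leads to the same requirement -- the rescaled surface metric must be regular up to~$\partial\Sigma$ with~$\partial\Sigma$ a geodesic of it -- which is once more exactly the absence of an~$\Ocal(z)$ term in~$g_{yy}$.
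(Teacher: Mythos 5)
Your argument is correct, and the computation checks out (including the sign: with the outward normal pointing toward decreasing~$z$, the level-set formula gives~$k_g = -1/\ell + \Ocal(z^2)$, and your hemisphere check is right). It is, however, a genuinely more hands-on route than the paper's. The paper never introduces adapted coordinates or the graph expansion of~$\Sigma$: it applies the conformal transformation law~\eqref{eq:kgconformal} with~$\Omega = Z$, uses that~$\partial M$ is totally geodesic in the Fefferman--Graham frame together with orthogonal intersection (Lemma~\ref{lem:extremalorthog}) to get~$\tilde{k}_g = \Ocal(\eps)$ and~$\tilde{n}^a\gradt_a Z = -|\gradt Z| + \Ocal(\eps^2) = -1/\ell + \Ocal(\eps^2)$ via~\eqref{eq:FGZ}, and concludes~$k_g = -1/\ell + \Ocal(\eps^2)$ directly, after which the~$ds = (\ell/\eps)\,\tilde{ds}$ rescaling gives~\eqref{eq:kgexpansion}. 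Your version instead pushes the extremal-surface (Graham--Witten-type) expansion~$X, T = \Ocal(z^2)$ through the induced metric and reads off~$k_g$ from the level sets; this makes the regularity input explicit -- note that mere orthogonality ($X = o(z)$) would only give~$-L/\eps + o(1/\eps)$, so the quadratic vanishing of the graph functions, which you correctly extract from the extremality equation, is genuinely needed -- whereas the paper's proof hides the analogous smoothness assumptions inside its~$\Ocal(\eps)$ and~$\Ocal(\eps^2)$ statements. Two small points of bookkeeping: extremality is used not only for the~$\Ocal(z)$ piece of~$g_{yy}$ but also (through orthogonality) for the stated forms of~$g_{zz}$ and the cross term, though your~$X,T=\Ocal(z^2)$ covers all of these at once; and your closing remark about the conformal packaging is essentially the paper's actual proof, so the two arguments are consistent rather than independent at that point.
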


\begin{proof}
The proof reproduces elements of that in~\cite{AleMaz08}.  We use the conformal transformation~\eqref{eq:kgconformal} of the geodesic curvature~$k_g$ to express it in terms of the geodesic curvature~$\tilde{k}_g$ of~$\partial \Sigma^\eps$ in the compactified geometry~$\tilde{g}_{ab} = Z^2 g_{ab}$:
\be
k_g = Z \left( \tilde{k}_g + \tilde{n}^a \gradt_a \ln Z\right) = \left( \tilde{n}^a \gradt_a Z + \frac{\eps}{\ell} \, \tilde{k}_g \right),
\ee
where~$\tilde{n}^a$ is the unit outward-pointing normal to~$\partial \Sigma^\eps$ in the compactified geometry and we have used the fact that on~$\partial \Sigma^\eps$,~$Z = \eps/\ell$.  With respect to~$\tilde{g}_{ab}$, the conformal boundary~$\partial M$ is a totally geodesic surface; i.e.,~any curve on it has no geodesic acceleration in the direction normal to the boundary.  But~$\tilde{k}_g$ is the component of the geodesic acceleration of~$\partial \Sigma^\eps$ tangent to~$\Sigma$, so since~$\Sigma$ intersects the boundary orthogonally (since it is an extremal surface), we have that~$\tilde{k}_g = \Ocal(\eps)$.  Also because~$\Sigma$ is orthogonal to the boundary, we have that~$\tilde{n}_a = -\gradt_a Z/ |\gradt Z| + \Ocal(\eps^2)$, and thus
\be
\tilde{n}^a \gradt_a Z = -|\gradt Z| + \Ocal(\eps^2) = -1/\ell + \Ocal(\eps^2),
\ee
where in the second equality we used the property~\eqref{eq:FGZ}.  We therefore find that~$k_g = -1/\ell + \Ocal(\eps^2)$.  Using the fact that the proper distance transforms as~$ds = Z^{-1} \tilde{ds} = (\ell/\eps) \tilde{ds}$, we thus find that
\be
\label{eq:kgexpansion}
\int_{\partial \Sigma^\eps} k_g \, ds = -\frac{L}{\eps} + \Ocal(\eps),
\ee
which completes the proof.
\end{proof}

\section{Regularity of the Reduced Hawking Mass}
\label{app:massreg}

In this Appendix, we show that the reduced Hawking mass~$I_H[\Sigma]$ is finite for any boundary-anchored surface~$\Sigma$, as long as the bulk stress tensor~$T_{ab}$ falls off sufficiently rapidly at infinity.  To see this, first note that by using the Gauss-Codazzi equations and the Einstein equation~\eqref{eq:Einstein}, the Ricci curvature of~$\Sigma$ can be decomposed in terms of the full stress tensor, the Weyl tensor of~$M$, and the extrinsic curvature of~$N$ and~$\Sigma$.  Using the fact that~$N$ is a maximal-volume surface (so~$^N \! K = 0$), the result is
\begin{multline}
\label{eq:sigmaricci}
^\Sigma \! R = -\frac{2}{\ell^2} - 2C_{abcd} n^a t^b n^c t^d + \, ^\Sigma \! K^2 - \, ^\Sigma \! K_{ab} \, ^\Sigma \! K^{ab} \\ + \left| ^N \! K_{ab} \right|^2 - 2\left| ^N \! K_{ab} n^b \right|^2 + 8\pi G_N \left(\sigma^{ab} T_{ab} - \frac{2}{3} T\right),
\end{multline}
where recall that~$t^a$ and~$n^a$ are the unit normals to~$N$ and~$\Sigma$ (in~$N$), respectively.  Plugging the above expression into~\eqref{eq:redHawkingmass}, we obtain
\begin{multline}
\label{eq:integral}
I_H[\Sigma] = 2 \int_{\Sigma} \left(\left| ^N \! \widehat{K}_{ab} \right|^2 - 2\left| ^N \! \widehat{K}_{ab} n^b \right|^2 - \left|^\Sigma \! \widehat{K}_{ab}\right|^2 \right) \\ - 4\int_{\Sigma} C_{abcd} n^a t^b n^c t^d + 16\pi G_N \int_{\Sigma} \left(\sigma^{ab} T_{ab} - \frac{2}{3} T\right),
\end{multline}
where we have rewritten everything in terms of the trace-free extrinsic curvatures
\begin{subequations}
\bea
^N \! \widehat{K}_{ab} &\equiv \, ^N \! K_{ab} - \frac{1}{3} \, ^N \! K h_{ab} = \, ^N \! K_{ab}, \\
^\Sigma \! \widehat{K}_{ab} &\equiv \, ^\Sigma \! K_{ab} - \frac{1}{2} \, ^\Sigma \! K \sigma_{ab},
\eea
\end{subequations}
with the first equality holding because~$^N \! K = 0$.  As an aside, note that when~$\Sigma$ is extremal, inserting~\eqref{eq:integral} into equation~\eqref{eq:ArenI} produces an expression for~$\Acal[\Sigma]$ analogous to ones appearing in~\cite{FonSem15}.

Now, consider a conformal frame~$\tilde{g}_{ab} = \Omega^2 g_{ab}$ which is regular at~$\partial \Sigma \subset \partial M$ (such a frame must exist, by definition~\ref{def:bndryanchored}).  From~\eqref{eq:Kconformal} (and from the conformal transformation properties~$\tilde{t}^a = \Omega^{-1} t^a$,~$\tilde{n}^a = \Omega^{-1} n^a$), it is straightforward to see that the integral in~\eqref{eq:integral} containing the trace-free extrinsic curvatures is a conformal invariant.  Likewise, the integral containing the Weyl scalar is a conformal invariant as well.  Thus each of those integrals can be evaluated in the conformally compactified geometry~$\tilde{g}_{ab}$, and are thus clearly finite.  Similarly, the integral containing the stress tensor will be finite as well as long as the stress tensor has the asymptotic falloff
\be
\sigma^{ab} T_{ab} - \frac{2}{3} T = \Ocal(\Omega^2).
\ee
This condition will hold if the components of the stress tensor in any orthonormal frame have an asymptotic falloff of~$T_{\hat{\mu} \hat{\nu}} = \Ocal(\Omega^2)$.  Thus under this assumption, we find that~$I_H[\Sigma]$ is finite (and generically nonzero) when evaluated on any boundary-anchored surface~$\Sigma$.

In the special case where the bulk is pure AdS and~$\Sigma$ lies on a bulk moment of time symmetry, we have that~$^N \! K_{ab}$,~$C_{abcd}$, and~$T_{ab}$ all vanish, in which case~\eqref{eq:integral} reproduces the expression~\eqref{eq:IAdS} claimed in the main text.

\section{Time Evolution of Integrals}
\label{app:intevolution}

In this Appendix, we review generally how to compute the time evolution of geometric integrals.  Our approach here is a more formal version of the so-called calculus of moving surfaces~\cite{Gri13}.

Consider a one-parameter family of~$n$-dimensional surfaces~$S_\tau$ in a Riemannian manifold (these need not be IMC flow surfaces, and they may have arbitrary codimension).  Let the induced metric, natural volume form, and extrinsic curvature of the~$S_\tau$ be~$\sigma_{ab}$,~$\eps_{a_1 \cdots a_n}$, and~$K^a_{bc}$, respectively\footnote{Our definition of~$K^a_{bc}$ is given in Appendix~\ref{app:Kconformal}, and recall that the natural volume form has the property that~$\eps_{a_1 \cdots a_n} \eps^{a_1 \cdots a_n} = n!$, and thus that~$D_b \eps_{a_1 \cdots a_n} = 0$, with~$D_a$ the covariant derivative on the~$S_\tau$.  See e.g.~Appendix B of~\cite{Wald} for details and conventions.}.

Next, let us introduce a flow field~$\tau^a$ whose integral curves map the~$S_\tau$ to one another and which is normalized by the condition~$\tau^a \grad_a \tau = 1$ (where we interpret~$\tau$ as a scalar over the family of surfaces).  Then the rate of change of any geometric object on the~$S_\tau$ with respect to~$\tau$ is given by a Lie derivative along~$\tau^a$.  In particular, consider an arbitrary integral
\be
F = \int_S \mathbf{f}
\ee
for some~$n$-form~$\mathbf{f}$ on the~$S_\tau$.  The~$\tau$-derivative of~$I$ is then given by
\be
\dot{F} = \int_S \pounds_\tau \mathbf{f},
\ee
which using Cartan's formula and Stokes' theorem can be expressed as
\be
\dot{F} = \int_S \left(\iota_\tau d \mathbf{f} + d(\iota_\tau \mathbf{f})\right) = \int_S \iota_\tau d\mathbf{f} + \int_{\partial S} \iota_\tau \mathbf{f}.
\ee
Now,~$\iota_\tau d\mathbf{f}$ is an~$n$-form, and therefore its restriction to the tangent space of~$S_\tau$ must be proportional to~$\epsb$.  We thus write~$\iota_\tau d\mathbf{f} = h \epsb + \cdots$ for some scalar~$h$, where~$\cdots$ are terms whose projection onto~$S_\tau$ vanishes.  Contracting both sides with~$\eps^{a_1 \cdots a_n}$ and writing~$\mathbf{f} = f \epsb$ for a scalar~$f$, we find after some manipulation that
\bea
h = \tau^b_\perp \grad_b f + (\tau_\perp)_b K^b f,
\eea
where~$K^a \equiv \sigma^{bc} K^a_{bc}$ and where~$\tau^a_\perp$ is the component of~$\tau^a$ normal to the~$S_\tau$:~$\tau^a_\perp \equiv \tau^a - {\sigma^a}_b \tau^b$.

Likewise, the boundary term in~$\dot{F}$ can be simplified:
\begin{subequations}
\bea
(\iota_\tau f \eps)_{a_1 \cdots a_{n-1}} &= f \tau^b \eps_{b a_1 \cdots a_{n-1}} = f \tau^b_\parallel \eps_{b a_1 \cdots a_{n-1}} \\
					&= f (v_b \tau^b_\parallel) \, ^\partial \! \eps_{a_1 \cdots a_{n-1}},
\eea
\end{subequations}
where~$v^a$ is the outward-pointing unit normal to~$\partial S$ in~$S$,~$^\partial \! \eps_{a_1 \cdots a_{n-1}}$ is the natural volume form on~$\partial S$, and~$\tau^a_\parallel$ is the component of~$\tau^a$ tangent to~$S_\tau$:~$\tau^a_\parallel \equiv {\sigma^a}_b \tau^b$.  Putting these results together, we find
\be
\label{eq:Fdot}
\dot{F} = \int_S \left(\tau^a_\perp \grad_a f + (\tau_\perp)_a K^a f \right) + \int_{\partial S} v_a  \tau^a_\parallel f,
\ee
where the natural volume form in each integral is understood.  Roughly speaking, the first term can be interpreted as the contribution to~$\dot{I}$ from a variation in~$f$ in the~$n^a$ direction, the second as a contribution from the expansion or contraction of the~$S_\tau$ as they flow in the~$n^a$ direction, and the third as a contribution from any current through the boundary of~$S_\tau$.

For the special case of IMC flow surfaces, we have that the~$S_\tau$ are codimension-one in~$N$, and thus it is natural to write their mean curvature as~$K = n_a K^a$, where~$n^a$ is the unit normal to them.  The IMC flow condition requires that~$\tau^a_\perp = n^a/K$, and thus we obtain
\be
\dot{F} = \int_S \left(\dot{f} + f\right) + \int_{\partial S} f v_a \tau^a_\parallel,
\ee
where~$\dot{f} \equiv (n^a/K) \grad_a f$.  In the case where the~$S$ are compact and without boundary, the boundary term above vanishes, and we recover equation~\eqref{eq:Fdotcompact} quoted in the main text.

\section{IMC Flows in Pure AdS}
\label{app:pureAdS}

In this Appendix we derive the half-plane pure AdS IMC flow~\eqref{eq:halfplaneflow}.  Working in the Poincar\'e coordinates of equation~\eqref{eq:PoincareAdS}, the RT surface for the half-plane~$z=0$,~$x \geq 0$ is the half-plane~$x = 0$,~$z \geq 0$.  By symmetry, the solution to the IMC flow equation~\eqref{eq:flowpde} will be a function of~$x$ and~$z$ only:~$\tau = \tau(x,z)$.

It is difficult to solve~\eqref{eq:flow} exactly in full generality, so we look for perturbative solutions about the RT surface.  Specifically, we will look for a series solution in~$x$:
\be
\tau(x,z) = \sum_{n=1}^\infty \tau_n(z) x^n.
\ee
Since we expect the flow surfaces to be anchored on the line~$x = 0$,~$z = 0$,~$\tau$ must be singular there, and therefore the coefficients~$\tau_n(z)$ must be singular at~$z = 0$\footnote{This is just an order of limits statement: at fixed nonzero~$x$,~$\tau$ should diverge as~$z \to 0$; at fixed nonzero~$z$,~$\tau$ should vanish as~$x \to 0$.}.

Working order-by-order in~$x$, we find that~$\tau_1(z) = 0$, and that~$\tau_2(z)$ must obey
\be
z\left[3\tau_2'(z)^2 - 2 \tau_2(z) \tau_2''(z)\right] +4\tau_2(z) \tau_2'(z) + 8z \tau_2(z)^3 = 0.
\ee
Searching for a singular solution to the above via a Frobenius series, we find the particular solution~$\tau_2(z) = 1/z^2$.  Continuing to solve~\eqref{eq:flowpde} order-by-order, it is then possible to obtain the series solution
\be
\tau(x,z) = \frac{x^2}{z^2} - \frac{1}{2} \frac{x^4}{z^4} + \frac{1}{3} \frac{x^6}{z^6} - \frac{1}{4} \frac{x^8}{z^8} + \cdots,
\ee
which can be resummed into
\be
\tau(x,z) = \ln\left(1+\frac{x^2}{z^2}\right),
\ee
which solves~\eqref{eq:flowpde} exactly.  This is precisely the solution~\eqref{eq:halfplaneflow} invoked in the main text.

\bibliographystyle{jhep}
\bibliography{all}

\end{document}